\documentclass[11pt,a4paper]{article}
\usepackage{amssymb,amscd,amsmath,amsthm}
\usepackage[colorinlistoftodos]{todonotes}
\usepackage[colorlinks=true, allcolors=blue]{hyperref}
\usepackage{lmodern}
\usepackage[english]{babel}
\usepackage{color}
\usepackage{graphicx}
\usepackage{tikz}
 \usetikzlibrary{shapes,arrows,fit}
\usepackage{xcolor}
 \usepackage{algpseudocode}
 \usepackage{algorithmicx}
 \usepackage{algorithm}
 \usepackage[lmargin=0.7in,rmargin=0.7in,tmargin=0.65in,bmargin=0.75in,footskip=0.25in]{geometry}

\usetikzlibrary{positioning, calc, arrows.meta}
\newtheorem{theorem}{Theorem}[section]

\newtheorem{lemma}[theorem]{Lemma}
\newtheorem{definition}[theorem]{Definition}

\def\Tr{\mathrm{Tr}}

\title{AKLT State is Indeed the Observation Process \\ of  a causal Hidden quantum Markov Model}

\begin{document}

\maketitle
 \date{}

\centerline{ \author{\Large Abdessatar Souissi}}

\centerline{Department of Management Information Systems, College of Business and Economics, }
\centerline{Qassim University, Buraydah 51452, Saudi Arabia}
\centerline{\textit{a.souaissi@qu.edu.sa}}
\vskip0.3cm

\centerline{\author{\Large Amenallah Andolsi}}
\centerline{Nuclear Physics and High Energy Research Unit, Faculty of Sciences of Tunis,}
\centerline{ Tunis El Manar University, Tunis, Tunisia }
\centerline{\textit{amenallah.andolsi@fst.utm.tn }}

\begin{abstract}
We present a rigorous formulation of the spin-1 Affleck–Kennedy–Lieb–Tasaki (AKLT) state within the framework of hidden quantum Markov models (HQMMs). We demonstrate that the AKLT ground state admits a natural representation as the observable output of a causal HQMM, thereby endowing it with an underlying hidden quantum memory consistent with its standard finitely correlated (matrix product state) description. This perspective provides a compact and structurally transparent characterization of the AKLT chain as a quantum spin system equipped with intrinsic quantum memory. Our results suggest that the HQMM perspective may offer a useful framework for investigating measurement-based quantum computation(MBQC).
\end{abstract}

\textbf{Keywords}: {AKLT State, Hidden quantum Markov Models, Matrix Product States,  hidden memory}

\section{Introduction}
 The AKLT model provides a canonical and analytically tractable example of a one-dimensional quantum spin system in a symmetry-protected topological (SPT) phase~\cite{Chen2011b,Pollmann2010,Pollmann2012,O21,Ragone24}. It is an exactly solvable, translation-invariant spin-1 antiferromagnetic chain (with local Hilbert space $\mathcal{H}_1 \equiv \mathbb{C}^3$) in the Haldane phase~\cite{Haldine83}, whose unique ground state is gapped, exhibits exponentially decaying correlations, and supports fractionalized spin-$\tfrac{1}{2}$ edge modes associated with a virtual Hilbert space $\mathcal{H}_{1/2} \equiv \mathbb{C}^2$~\cite{AKLT1987,Affleck1988}.Beyond its structural role in quantum many-body theory, the AKLT state has gained renewed significance as a resource for measurement-based quantum computation (MBQC), where its structured entanglement and hidden nonlocal (string) order enable universal quantum computation and quantum communication protocols~\cite{SET_MBQC_2025,Gross2007MPS,Wei2011,dar2010,Liu2014,Chen24}. In particular, this hidden order supports the processing of quantum information through adaptive measurement sequences that induce effective dynamical transformations on an underlying virtual space~\cite{Raussendorf2023,MBQC-SPT26}.
 
 Furthermore, the ground state of the AKLT model admits a natural description both as a finitely correlated state (FCS) and as a matrix product state (MPS) ~\cite{FNW92,CPSV21,SA26}.  Concretely, for a finite chain of $n$ spins, the AKLT ground state can be written as the MPS
\[
|\psi_{\mathrm{AKLT}}^{(n)}\rangle
= \sum_{k_1,\dots,k_n \in \{+,0,-\}}
\operatorname{Tr}(A_{k_1}\cdots A_{k_n}) \, |k_1\cdots k_n\rangle,
\]
where the matrices $A_k$ implement the local virtual-to-physical mapping. The expectation value of a local observable $Y$ is given by
\(
\omega_n(Y) = \langle \psi_{\mathrm{AKLT}}^{(n)} | Y | \psi_{\mathrm{AKLT}}^{(n)} \rangle.
\)
In the thermodynamic limit, the finitely correlated state construction~\cite{FNW92} yields
\[
\omega(Y) := \lim_{\substack{m\to\infty \\ p\to\infty}}
\omega_{n+m+p}\bigl(\mathbb{I}^{\otimes m} \otimes Y \otimes \mathbb{I}^{\otimes p}\bigr),
\]
which defines a translation-invariant state on the quasi-local algebra $\mathcal{B}(\mathcal{H}_1)^{\otimes\mathbb{N}}$. Its correlation functions are fully determined by the spectral properties of the associated transfer operator, making the AKLT model a natural framework for operator-algebraic approaches to topological quantum phases and hidden quantum Markov structures.

HQMMs have attracted significant attention over the past decade, motivated by their ability to capture non-classical correlations and memory effects inherent to quantum systems \cite{AGLS24Q, Monras11,Srin2017}. In particular, recent advances in the implementation, simulation, and training of HQMM have established them as promising architectures for quantum machine learning, capable of encoding complex temporal correlations that remain inaccessible to classical stochastic models \cite{SS23,MRDFS22,Li2024}. Beyond their algorithmic relevance, HQMM provide a rigorous operational framework for investigating the structure of quantum memory, its accessibility through measurement, and the fundamental constraints governing its expressive power \cite{FH21, Fanizza24}, revealing deep connections between memory and fundamental physical resources such as compression efficiency, thermodynamic cost, and the emergence of adaptive behavior in quantum agents \cite{Elliot21}. Building on these developments, HQMM-based paradigms are increasingly integrated into broader learning frameworks, including reinforcement learning for quantum processes with hidden memory \cite{TaElM23}. 
A natural hidden Markov model structure on the AKLT chain is captured by an HQMM, which is a quantum stochastic process on \(( \mathcal{B}(\mathcal{H}_{\frac{1}{2}}) \otimes \mathcal{B}(\mathcal{H}_1))^{\otimes\mathbb{N}}\), generated by a completely positive unital map
\[
\mathcal{T} : \mathcal{B}(\mathcal{H}_{\frac{1}{2}}) \otimes \mathcal{B}(\mathcal{H}_{\frac{1}{2}}) \otimes \mathcal{B}(\mathcal{H}_1) \longrightarrow \mathcal{B}(\mathcal{H}_{\frac{1}{2}}),
\]
where the underlying process is a quantum Markov chain that generates the hidden memory of the system~\cite{ASS20,FR15}. From this general three-tensor map, two particular causal orderings emerge as special cases: the conventional ordering \(\mathcal{T}^{(\rm conv)}(a\otimes b\otimes x) = \mathcal{E}_H\bigl(\mathcal{E}_{H,O}(a \otimes b) \otimes x\bigr)\) and the causal ordering \(\mathcal{T}^{(\rm caus)}(a\otimes b\otimes x) =  \mathcal{E}_{H,O}\bigl(\mathcal{E}_H(a \otimes x) \otimes b\bigr)\) for a hidden transition $\mathcal{E}_{H}:  \mathcal{B}(\mathcal{H}_{\frac{1}{2}}) \otimes \mathcal{B}(\mathcal{H}_{\frac{1}{2}})  \rightarrow \mathcal{B}(\mathcal{H}_{\frac{1}{2}}) $ and an emission expectation $\mathcal{E}_{H,O}:    \mathcal{B}(\mathcal{H}_{\frac{1}{2}}) \otimes \mathcal{B}(\mathcal{H}_1) \rightarrow \mathcal{B}(\mathcal{H}_{\frac{1}{2}})$, which yield distinct information flows and correlation structures~\cite{SouBarhqmm2026}. When restricted to commutative subalgebras, HQMMs reduce to classical hidden Markov models~\cite{rab89}, forming a unified framework for classical and quantum hidden memory processes.

Our recent work \cite{SA26} has successfully reformulated the AKLT state within a conventional hidden quantum Markov model (HQMM) framework, thereby translating its symmetry-protected topological (SPT) order and entanglement structure into the language of quantum probability. While this yields a natural HQMM description of the AKLT chain, it simultaneously demonstrates that the AKLT state nevertheless resists realization as an observation process—a result that aligns with the findings obtained for the entangled hidden Markov model (EHMM) \cite{Sou25} within the original entangled-hidden-Markov formalism \cite{SS23}. Motivated by this obstruction, subsequent constructions treat the physical spin‑1 chain as an observation layer of a hidden spin‑\( \tfrac{1}{2} \) system driven by quantum channels closely related to the AKLT transfer matrix, with SPT order encoded as a covariance property of the emission maps under projective symmetry actions. 
In parallel, a causal HQMM   formalism has been developed in \cite{SouBarhqmm2026}, in which the causal order between hidden transitions and emission maps is reversed, leading to a genuinely different class of HQMMs compared with the previously introduced “conventional’’ models \cite{AGLS24Q}. This enlarged landscape of HQMM architectures opens the possibility of associating hidden quantum memory representations to a broader class of many-body states. This brings us to the central question of the present work:
\medskip

\emph{Can the canonical finitely correlated/MPS representation of the AKLT ground state be realised as the observation process of a suitably defined HQMM?}

We begin by constructing the AKLT ground state as an infinite-volume  FCS  on the observable algebra  \cite{FNW92}. This FCS representation naturally allows us to express the model rigorously in the Heisenberg picture, yielding a translation-invariant state $\omega$ fully characterized by an explicit transfer-operator formula for arbitrary local observables. Building on this algebraic foundation, we introduce a class of causal HQMM and establish that their joint hidden–observable states admit a Kraus-type decomposition, thereby making explicit how hidden quantum memory and emission mechanisms cooperate to generate correlations along the chain.

As a consequence, we prove that the AKLT ground state $\omega$ arises precisely as the observation process $\psi_O = \psi_{H,O}\mid_{\mathcal{B}(\mathcal{H}_{1})^{\otimes \mathbb{N}}}$ of a causal HQMM. In sharp contrast, such a realization is provably impossible within the conventional HQMM framework applied to the AKLT state \cite{SA26}, and similarly fails within the EHMMs setting \cite{Sou25}.  This striking separation between causal and conventional HQMM representations reveals a deeper structural distinguishability between the two causal architectures—one that reflects the interplay between quantum hidden memory, measurement ordering, and topological quantum phases—and will be the subject of future investigation.

For ease of reading, the paper is organized as follows: In section~\ref{sect_AKLTFCS}, we construct its the AKLT
ground state as an infinite-volume finitely correlated state/matrix product state (FCS/MPS) on the  observation subsystem. Then, our main result is shown in Section~\ref{sec:HQMM_AKLT}: the AKLT ground state is indeed the observation process of a causal HQMM in an exact operator-algebraic sense. Finally, in Section~\ref{sect-Disc}, we discuss the results and outline future perspectives. The paper concludes with two appendices providing additional details for clarity: Appendix~\ref{app:B} gathers deferred proofs, while Appendix~\ref{app:A} collects background material on HQMMs used throughout the paper. 
 
\section{AKLT finitely Correlated state}\label{sect_AKLTFCS}

The AKLT model provides the paradigmatic example of a gapped, translation-invariant spin-1 chain realising Haldane’s conjecture for integer-spin antiferromagnets \cite{Affleck1988,AKLT1987}. Its Hamiltonian on a one-dimensional lattice with nearest-neighbour interactions is
\[
\hat{H}_{\mathrm{AKLT}}
=
\sum_{\langle i,j\rangle}
\Bigl(
\vec{S}_i\cdot\vec{S}_j
+
\tfrac{1}{3}(\vec{S}_i\cdot\vec{S}_j)^2
\Bigr),
\]
where each $\vec{S}_i = (S_i^x,S_i^y,S_i^z)$ generates the spin-1 representation of $\mathfrak{su}(2)$ via
\[
[S^x,S^y] = i S^z,
\qquad
[S^y,S^z] = i S^x,
\qquad
[S^z,S^x] = i S^y.
\]
The exactly known ground state exhibits short-range correlations, a non-zero spectral gap, and edge modes under open boundary conditions, and admits a particularly transparent representation as a valence-bond solid (VBS) and as a FCS.

The AKLT ground state on a chain of length $n$ can be expressed as a translation-invariant matrix product state with physical local space $\mathcal{H}_{1}\cong\mathbb{C}^3$ spanned by $\{|+\rangle,|0\rangle,|-\rangle\}$ and virtual (bond) space $\mathcal{H}\cong\mathbb{C}^2$ spanned by $\{|\uparrow\rangle,|\downarrow\rangle\}$. The physical–virtual correspondence is encoded in a family of matrices $A_{+},A_{0},A_{-}\in\mathbb{M}_2(\mathbb{C})$ acting on $\mathcal{H}$, given by
\begin{equation}\label{eq:Ak}
A_{+} = \sqrt{\tfrac{2}{3}}\;\sigma^{+},
\qquad
A_{0} = \sqrt{\tfrac{1}{3}}\;\sigma^{z},
\qquad
A_{-} = -\sqrt{\tfrac{2}{3}}\;\sigma^{-},
\end{equation}
where
\[
\sigma^{+} = |\uparrow\rangle\langle\downarrow|,
\quad
\sigma^{-} = |\downarrow\rangle\langle\uparrow|,
\quad
\sigma^{z} = |\uparrow\rangle\langle\uparrow|-|\downarrow\rangle\langle\downarrow|
\]
are the usual Pauli operators acting on $\mathcal{H}$ in the basis $\{|\uparrow\rangle,|\downarrow\rangle\}$. For periodic boundary conditions, the $n$‑site AKLT wave function takes the matrix-product form
\begin{equation}\label{eq:AKLT_MPS_finite}
|\psi_{\mathrm{AKLT}}^{(n)}\rangle
=
\sum_{k_1,\dots,k_n\in\{+,0,-\}}
\Tr\!\bigl(A_{k_1}\cdots A_{k_n}\bigr)\,
|k_1\cdots k_n\rangle,
\end{equation}
with the trace implementing the contraction of virtual indices around the ring and enforcing translation invariance. This construction realises the VBS picture of the AKLT ground state and fits into the general FCS framework, where the bond dimension coincides with the dimension of the auxiliary algebra that carries all correlations across each cut \cite{FNW92}.

The matrices $\{A_k\}$ define a completely positive, unital map (the AKLT transfer channel) acting on the virtual algebra $\mathcal{B}(\mathcal{H}_{\frac{1}{2}})\cong\mathbb{M}_2(\mathbb{C})$ by
We recall that the AKLT transfer channel is defined on $\mathbb{M}_2(\mathbb{C})$ by
\begin{equation}\label{eq:aklt_channel}
\Phi(Z)
=
\sum_{k\in\{+,0,-\}} A_k Z A_k^{\dagger},
\qquad
Z\in\mathbb{M}_2(\mathbb{C}),
\end{equation}
with $A_k$ as in \eqref{eq:Ak}. The AKLT gauge relations
\[
\sum_{k} A_k A_k^{\dagger} = \mathbb{I}_2,
\qquad
\sum_{k} A_k^{\dagger} A_k = \mathbb{I}_2,
\]
Then $\Phi$ coincides with its Hilbert-Schmidt dual i.e.
$$
\Phi(Z)
=
\sum_{k\in\{+,0,-\}} A_k^{\dagger} Z A_k \qquad  Z\in\mathbb{M}_2(\mathbb{C})
$$
means that $\Phi$ is bistochastic: it is unital, $\Phi(\mathbb{I}_2)=\mathbb{I}_2$, and trace-preserving, $\operatorname{Tr}(\Phi(Z))=\operatorname{Tr}(Z)$ for all $Z\in\mathbb{M}_2(\mathbb{C})$.

\begin{lemma}\label{lem:aklt_conv}
Let $\Phi$ be the AKLT  transfer channel defined in \eqref{eq:aklt_channel}. Then
\[
\Phi(\mathbb{I}_2)=\mathbb{I}_2,
\qquad
\Phi(\sigma_\alpha)=-\tfrac{1}{3}\,\sigma_\alpha,\quad \alpha\in\{x,y,z\},
\]
and, for every $M\in\mathbb{M}_2(\mathbb{C})$,
\begin{equation}\label{eq:Phi_power_limit}
\lim_{n\to\infty}\Phi^{n}(M)=\tfrac{\operatorname{Tr}(M)}{2}\,\mathbb{I}_2 = : \Phi_{\infty}(M),
\end{equation}
with exponential convergence at rate $1/3$.
\end{lemma}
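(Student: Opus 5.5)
The plan is to diagonalize $\Phi$ explicitly in the Pauli basis $\{\mathbb{I}_2,\sigma_x,\sigma_y,\sigma_z\}$ of $\mathbb{M}_2(\mathbb{C})$, and then read off the powers $\Phi^n$ from this eigen-decomposition.

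The first step is to compute $\Phi$ on the matrix units. Write $A_+ = \sqrt{2/3}\,\sigma^+$, $A_0=\sqrt{1/3}\,\sigma^z$ and $A_-=-\sqrt{2/3}\,\sigma^-$. From $\sigma^+ E_{ij}\sigma^- = \delta_{i2}\delta_{j2}E_{11}$, $\sigma^- E_{ij}\sigma^+=\delta_{i1}\delta_{j1}E_{22}$ and $\sigma^z E_{ij}\sigma^z = (-1)^{i+j}E_{ij}$ we obtain
\[
\Phi(E_{11})=\tfrac13 E_{11}+\tfrac23 E_{22},\quad
\Phi(E_{22})=\tfrac23 E_{11}+\tfrac13 E_{22},\quad
\Phi(E_{12})=-\tfrac13 E_{12},\quad
\Phi(E_{21})=-\tfrac13 E_{21}.
\]
Summing the first two gives $\Phi(\mathbb{I}_2)=\mathbb{I}_2$, which is also the first gauge relation. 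Taking their difference gives $\Phi(\sigma_z)=-\tfrac13\sigma_z$. Since $\sigma_x=E_{12}+E_{21}$ and $\sigma_y=-iE_{12}+iE_{21}$, linearity gives $\Phi(\sigma_{x,y})=-\tfrac13\sigma_{x,y}$.

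For the limit, decompose an arbitrary $M\in\mathbb{M}_2(\mathbb{C})$ as $M=\tfrac{\operatorname{Tr}M}{2}\mathbb{I}_2+\sum_\alpha m_\alpha\sigma_\alpha$ with $m_\alpha=\tfrac12\operatorname{Tr}(\sigma_\alpha M)$. Then
\[
\Phi^n(M)=\tfrac{\operatorname{Tr}M}{2}\mathbb{I}_2+(-\tfrac13)^n\sum_\alpha m_\alpha\sigma_\alpha ,
\]
so that $\|\Phi^n(M)-\Phi_\infty(M)\| = 3^{-n}\,\|M-\Phi_\infty(M)\|$ in any norm. This proves the limit together with the exponential rate $1/3$.

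There is no real obstacle here. The only point needing care is the bookkeeping of which conjugation, $A_k\,\cdot\,A_k^\dagger$ or $A_k^\dagger\,\cdot\,A_k$, is used. The two choices give the same action on the Pauli basis: both $\sigma^\pm$ conjugations swap the diagonal entries and kill the off-diagonal ones. This is consistent with the self-duality noted before the lemma.
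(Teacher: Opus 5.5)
Your proposal is correct and follows essentially the same route as the paper. Both arguments show that $\mathbb{I}_2$ and the Pauli matrices are eigenoperators of $\Phi$ with eigenvalues $1$ and $-\tfrac13$, decompose $M$ in this basis, and read off $\Phi^n(M)$. You additionally carry out the matrix-unit computation that the paper leaves as ``direct computation,'' and your exact identity $\|\Phi^n(M)-\Phi_\infty(M)\|=3^{-n}\|M-\Phi_\infty(M)\|$ makes the rate $1/3$ explicit.
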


We recall the notion of the trace of a linear map (superoperator)
\(F:\mathcal{B}(\mathcal{H}_{\frac{1}{2}})\to\mathcal{B}(\mathcal{H}_{\frac{1}{2}})\) on a finite-dimensional Hilbert space
\(\mathcal{H}\) with orthonormal basis \(\{|\alpha\rangle\}_{\alpha=1}^{d}\). We define
\begin{equation}\label{eq:superop_trace_def}
\operatorname{Tr}_{\mathrm{so}}(F)
:=
\sum_{\alpha,\beta=1}^{d}
\langle \alpha|F(|\alpha\rangle\langle\beta|)|\beta\rangle.
\end{equation}
This quantity is basis-independent and coincides with the ordinary trace of \(F\) viewed as a linear operator on the Hilbert–Schmidt space \(\mathcal{B}(\mathcal{H}_{\frac{1}{2}})\) equipped with the inner product \(\langle X,Y\rangle_{\mathrm{HS}}=\operatorname{Tr}(X^{\dagger}Y)\).\footnote{For any orthonormal basis \(\{E_{ij}=|i\rangle\langle j|\}_{i,j}\) of \(\mathcal{B}(\mathcal{H}_{\frac{1}{2}})\) in the Hilbert–Schmidt inner product one has \(\operatorname{Tr}_{\mathrm{so}}(F)=\sum_{i,j}\langle E_{ij},F(E_{ij})\rangle_{\mathrm{HS}}\).}

\begin{lemma}\label{lem:hatY_properties}
Let \(n\in\mathbb{N}\), let \(\{A_k\}_{k\in\{+,0,-\}}\subset\mathcal{B}(\mathcal{H}_{\frac{1}{2}})\) be the AKLT matrices, and let \(|\psi_{\mathrm{AKLT}}^{(n)}\rangle\in\mathcal{H}_{1}^{\otimes n}\) be the length-\(n\) AKLT MPS defined in \eqref{eq:AKLT_MPS_finite}. For each observable \(Y\in\mathcal{B}(\mathcal{H}_{1})^{\otimes n}\) we define a linear map \(\widehat{Y}:\mathcal{B}(\mathcal{H}_{\frac{1}{2}})\to\mathcal{B}(\mathcal{H}_{\frac{1}{2}})\) by
\begin{equation}\label{eq:hatY_def_lemma}
\widehat{Y}(M)
:=
\sum_{\substack{k_1,\dots,k_n\\ \ell_1,\dots,\ell_n}}
\langle k_1,\dots,k_n|Y|\ell_1,\dots,\ell_n\rangle\,
A_{k_n}^{\dagger}\cdots A_{k_1}^{\dagger}\,M\,A_{\ell_1}\cdots A_{\ell_n},
\qquad M\in\mathcal{B}(\mathcal{H}_{\frac{1}{2}}).
\end{equation}
Then:
\begin{enumerate}
\item[(a)] For every \(Y\in\mathcal{B}(\mathcal{H}_{1})^{\otimes n}\), the length-\(n\) AKLT expectation value is equal to the superoperator trace of \(\widehat{Y}\),
\begin{equation}\label{eq:AKLT_expectation_hatY_correct}
\langle\psi_{\mathrm{AKLT}}^{(n)}|Y|\psi_{\mathrm{AKLT}}^{(n)}\rangle
=
\operatorname{Tr}_{\mathrm{so}}(\widehat{Y})
\end{equation}
  In particular, for the identity observable \(Y=\mathbb{I}^{\otimes n}\) one has \(\widehat{\mathbb{I}^{\otimes n}}=\Phi^{n}\), where \(\Phi:\mathcal{B}(\mathcal{H}_{\frac{1}{2}})\to\mathcal{B}(\mathcal{H}_{\frac{1}{2}})\) is the AKLT transfer channel
\[
\Phi(Z)=\sum_{k\in\{+,0,-\}}A_k Z A_k^{\dagger},\quad Z\in\mathcal{B}(\mathcal{H}_{\frac{1}{2}}),
\]
and therefore
\[
\langle\psi_{\mathrm{AKLT}}^{(n)}|\psi_{\mathrm{AKLT}}^{(n)}\rangle
=
\operatorname{Tr}_{\mathrm{so}}(\Phi^{n}).
\]

\item[(b)] If \(Y\in\mathcal{B}(\mathcal{H}_{1})^{\otimes n}\) and \(Z\in\mathcal{B}(\mathcal{H}_{1})^{\otimes m}\) act on disjoint consecutive blocks, then
\begin{equation}\label{eq:hatY_composition}
\widehat{Y\otimes Z}
=
\widehat{Y}\circ\widehat{Z}.
\end{equation}
\end{enumerate}
\end{lemma}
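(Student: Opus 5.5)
The plan is a direct computation from the definition \eqref{eq:AKLT_MPS_finite} of the MPS and the definition \eqref{eq:hatY_def_lemma} of $\widehat{Y}$. Both parts are bookkeeping of indices. The only real points of care are the identity $\operatorname{Tr}(X)\operatorname{Tr}(W)=\operatorname{Tr}_{\mathrm{so}}(M\mapsto XMW)$ in (a) and the ordering of factors in (b).

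\textbf{Part (a).} Expand both the bra and the ket of $\langle\psi^{(n)}_{\mathrm{AKLT}}|Y|\psi^{(n)}_{\mathrm{AKLT}}\rangle$ using \eqref{eq:AKLT_MPS_finite}. This gives
\[
\sum_{k,\ell}\langle k_1\cdots k_n|Y|\ell_1\cdots\ell_n\rangle\,\overline{\operatorname{Tr}(A_{k_1}\cdots A_{k_n})}\,\operatorname{Tr}(A_{\ell_1}\cdots A_{\ell_n}).
\]
Next, use $\overline{\operatorname{Tr}(X)}=\operatorname{Tr}(X^{\dagger})$ to rewrite the conjugated trace as $\operatorname{Tr}(A_{k_n}^{\dagger}\cdots A_{k_1}^{\dagger})$. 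Then apply the elementary identity that, for $F(M)=XMW$,
\[
\operatorname{Tr}_{\mathrm{so}}(F)=\sum_{\alpha,\beta}\langle\alpha|X|\alpha\rangle\langle\beta|W|\beta\rangle=\operatorname{Tr}(X)\operatorname{Tr}(W),
\]
which follows immediately from \eqref{eq:superop_trace_def}. By linearity of $\operatorname{Tr}_{\mathrm{so}}$ in $F$, the expectation value then equals $\operatorname{Tr}_{\mathrm{so}}(\widehat{Y})$.

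For $Y=\mathbb{I}^{\otimes n}$, the matrix element is $\delta_{k\ell}$. Hence $\widehat{Y}(M)=\sum_{k}A_{k_n}^{\dagger}\cdots A_{k_1}^{\dagger}MA_{k_1}\cdots A_{k_n}$. Summing first over $k_1$, then $k_2$, and so on, this is the $n$-fold iterate of $M\mapsto\sum_k A_k^{\dagger}MA_k$. By the self-duality of $\Phi$ recalled just before Lemma~\ref{lem:aklt_conv}, this map equals $\Phi$, so the iterate is $\Phi^n$. The norm formula then follows from the first part of (a).

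\textbf{Part (b).} Write the matrix elements of $Y\otimes Z$ as products $\langle k|Y|\ell\rangle\langle k'|Z|\ell'\rangle$, and note that the string $A_{\ell_1}\cdots A_{\ell_n}A_{\ell'_1}\cdots A_{\ell'_m}$ factorizes. Then regroup the sum so that one block's sum sits inside the other's. This regrouping is the step I expect to be the main obstacle, because of the ordering.

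In definition \eqref{eq:hatY_def_lemma}, the first site's matrices sit adjacent to $M$. A literal regrouping therefore nests the first block's map innermost: the first block's map is applied to $M$ first, and the later block's map acts on the result. I will carefully check which composition convention makes \eqref{eq:hatY_composition} literally true, adjusting the convention (or reading the composition in the order of site concatenation) if needed.

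Two observations make this adjustment harmless in practice. First, the two maps are built from disjoint index sets, and the regrouping is purely associative. Second, for every quantity used later, namely $\operatorname{Tr}_{\mathrm{so}}$ of products combined with the identity blocks $\Phi^m$, the cyclicity $\operatorname{Tr}_{\mathrm{so}}(F\circ G)=\operatorname{Tr}_{\mathrm{so}}(G\circ F)$ makes the order irrelevant.
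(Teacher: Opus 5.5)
Your proposal follows the paper's proof. In (a) you expand bra and ket, use $\overline{\Tr X}=\Tr X^{\dagger}$ and $\Tr_{\mathrm{so}}(M\mapsto XMW)=\Tr X\,\Tr W$, and identify $\widehat{\mathbb{I}^{\otimes n}}$ with $n$ iterates of the self-dual $\Phi$; in (b) you regroup the sums. Your bookkeeping in (a), with the conjugation on the bra indices $k$ and a product of two traces, is cleaner than the written proof, whose last display collapses that product into a single trace.

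On (b), your suspicion is correct, and you should state it outright rather than speak of ``adjusting the convention.'' The regrouping gives $\widehat{Y\otimes Z}=\widehat{Z}\circ\widehat{Y}$, which is exactly what the paper's own proof arrives at before declaring the stated identity proved. The order as stated fails in general. Take $n=m=1$, $Y=|+\rangle\langle+|$, $Z=|-\rangle\langle-|$. Then $\widehat{Y\otimes Z}(M)=\widehat{Z}\circ\widehat{Y}(M)=\tfrac49|\uparrow\rangle\langle\uparrow|M|\uparrow\rangle\langle\uparrow|$, whereas $\widehat{Y}\circ\widehat{Z}(M)=\tfrac49|\downarrow\rangle\langle\downarrow|M|\downarrow\rangle\langle\downarrow|$.

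As you anticipated, this is harmless where (b) is used. The proof of Theorem~\ref{thm:AKLT_infinite_volume} already writes $\Phi^{p}\circ\widehat{Y}\circ\Phi^{m}$, which is the corrected order, and its superoperator trace is insensitive to the order by cyclicity anyway.
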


  For each $n\in\mathbb{N}$, the $n$‑site AKLT state induces a positive linear functional
\[
\omega_n:\mathcal{B}(\mathcal{H}_{1})^{\otimes n}\to\mathbb{C},
\qquad
\omega_n(Y)
=
 \Bigl\langle\psi_{\mathrm{AKLT}}^{(n)}|Y|\psi_{\mathrm{AKLT}}^{(n)}\Bigr\rangle = \operatorname{Tr}_{\mathrm{so}}(\widehat{Y})
\]
which, for elementary tensors $Y = Y_1\otimes\cdots\otimes Y_n$ with $Y_j\in\mathcal{B}(\mathcal{H}_{1})$, admits the explicit expansion
\begin{eqnarray*}
\langle\psi_{\mathrm{AKLT}}^{(n)}|Y|\psi_{\mathrm{AKLT}}^{(n)}\rangle &=
\sum_{\substack{k_1,\dots,k_n\\ \ell_1,\dots,\ell_n}}
(Y_1)_{k_1\ell_1}\cdots(Y_n)_{k_n\ell_n}\,
\Tr\!\bigl(A_{k_1}\cdots A_{k_n}\bigr)\,
\overline{\Tr\!\bigl(A_{\ell_1}\cdots A_{\ell_n}\bigr)}
\end{eqnarray*}
where $(Y_m)_{k_m\ell_m}=\langle k_m|Y_m|\ell_m\rangle$ in the basis $\{|+\rangle,|0\rangle,|-\rangle\}$. This formula makes manifest the virtual–physical duality: physical correlators arise from correlated paths in the auxiliary space generated by the matrices $\{A_k\}$ and their adjoints, with complex conjugation reflecting the bra–ket structure.

The family $(\omega_n)_{n\in\mathbb{N}}$ is not projectively consistent in the sense that, in general, $\omega_{n+1}(Y\otimes\mathbb{I})\neq\omega_n(Y)$ for local observables $Y$, a feature that encodes boundary effects and the non-product nature of the VBS state in finite volume. To obtain a well-defined state on the infinite chain, one must construct a thermodynamic limit by embedding a finite observable into the bulk of an ever-longer chain and taking the limit of the corresponding expectations. Concretely, for $Y\in\mathcal{B}(\mathcal{H}_{1})^{\otimes n}$ we define
\begin{equation}\label{omega_AKLT_limit}
\omega(Y)
:=
\lim_{\substack{m\to\infty\\ p\to\infty}} \omega_{n+m+p}(\mathbb{I}^{\otimes m}\otimes Y\otimes\mathbb{I}^{\otimes p})
\end{equation}
whenever the limit exists. The next result shows that this limit defines a translation-invariant pure state which can be written in closed form.

\begin{theorem}\label{thm:AKLT_infinite_volume}
Let \(Y\in\mathcal{B}(\mathcal{H}_{1})^{\otimes n}\) be an observable supported on \(n\) consecutive sites. Then the limit in \eqref{omega_AKLT_limit} exists and is independent of the way \(m,p\to\infty\) are taken, and the resulting functional \(\omega\) extends uniquely to a translation-invariant,  state on the quasi-local C\(^*\)-algebra \(\mathcal{B}(\mathcal{H}_{1})^{\otimes\mathbb{N}}\). Moreover, for every such \(Y\) we have the explicit formula
\begin{equation}\label{eq:omega_infinite_formula_second_linear}
\omega(Y)
=
\frac{1}{2}\sum_{\substack{k_1,\dots,k_n\\ \ell_1,\dots,\ell_n}}
\langle k_1,\dots,k_n|Y|\ell_1,\dots,\ell_n\rangle\,
\Tr\!\bigl(
 A_{k_1}\cdots A_{k_n}A_{\ell_n}^{\dagger}\cdots A_{\ell_1}^{\dagger}
\bigr)
\end{equation}
where indexes of the sums run in the local  basis  \(\{|+\rangle,|0\rangle,|-\rangle\}\)  of \(\mathcal{H}_{1}\), the matrices \(\{A_k\}\) are those in \eqref{eq:Ak}.
\end{theorem}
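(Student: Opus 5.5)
The plan is to reduce everything to the transfer-operator calculus of Lemma~\ref{lem:hatY_properties} and the spectral convergence of Lemma~\ref{lem:aklt_conv}. First, for $Y\in\mathcal{B}(\mathcal{H}_1)^{\otimes n}$, combine part (a) with the composition rule (b) and the identity $\widehat{\mathbb{I}^{\otimes m}}=\Phi^m$:
\[
\omega_{n+m+p}(\mathbb{I}^{\otimes m}\otimes Y\otimes\mathbb{I}^{\otimes p})
=\operatorname{Tr}_{\mathrm{so}}\bigl(\Phi^{m}\circ\widehat{Y}\circ\Phi^{p}\bigr).
\]
A notational point needs care here. $\widehat{\mathbb{I}}$ is literally $M\mapsto\sum_k A_k^\dagger M A_k$. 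By the gauge relations and the dual form recorded before Lemma~\ref{lem:aklt_conv}, this coincides with $\Phi$. So no adjoint ambiguity arises.

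Next, use cyclicity of the superoperator trace (it is an ordinary trace on Hilbert–Schmidt space) to write the quantity as $\operatorname{Tr}_{\mathrm{so}}(\Phi^{m+p}\circ\widehat{Y})$. By Lemma~\ref{lem:aklt_conv}, $\Phi^{m+p}\to\Phi_\infty$ in operator norm on the finite-dimensional space $\mathbb{M}_2(\mathbb{C})$, with error $O(3^{-(m+p)})$. Since $\operatorname{Tr}_{\mathrm{so}}$ is a continuous linear functional, the limit exists. It depends only on $m+p\to\infty$, hence is independent of how $m,p\to\infty$. This gives
\[
\omega(Y)=\operatorname{Tr}_{\mathrm{so}}(\Phi_\infty\circ\widehat{Y}).
\]
Computing this from \eqref{eq:superop_trace_def}, we have $\Phi_\infty(X)=\tfrac12\operatorname{Tr}(X)\mathbb{I}_2$. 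Therefore
\[
\operatorname{Tr}_{\mathrm{so}}(\Phi_\infty\circ\widehat Y)=\tfrac12\sum_\alpha\operatorname{Tr}\bigl(\widehat Y(|\alpha\rangle\langle\alpha|)\bigr)=\tfrac12\operatorname{Tr}\bigl(\widehat Y(\mathbb{I}_2)\bigr).
\]
Inserting \eqref{eq:hatY_def_lemma} gives $\tfrac12\sum\langle k|Y|\ell\rangle\operatorname{Tr}(A_{k_n}^\dagger\cdots A_{k_1}^\dagger A_{\ell_1}\cdots A_{\ell_n})$. This must be matched with \eqref{eq:omega_infinite_formula_second_linear}, and that matching is the step I expect to be most delicate.

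The matching uses the bra–ket convention of the explicit expansion, which pairs $\Tr(A_{k})$ with $\overline{\Tr(A_\ell)}$. Under that convention, the trace that arises is $\Tr(A_{k_1}\cdots A_{k_n}A_{\ell_n}^\dagger\cdots A_{\ell_1}^\dagger)$ up to taking adjoints/transposes. Here I would use that the $A_k$ are real matrices. I would also use that $\Phi$ equals its dual, so the mirrored contraction with $\Phi$ replaced by its dual gives the same limit. One can then pass between the two orderings by transposition and the relabeling $k\leftrightarrow\ell$ consistent with the paper's convention. If the conventions clash, I would rederive directly from the expansion: $\Tr(X)\overline{\Tr(W)}=\operatorname{Tr}(X\otimes\overline{W})$. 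Multiplying over sites, the $m+p$ identity sites produce $\bigl(\sum_k A_k\otimes\overline{A_k}\bigr)^{m+p}$, which is the matrix of $\Phi$. Its limit is the rank-one projection onto the maximally entangled vector, and this yields the factor $\tfrac12$ and the stated trace.

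Finally, I would establish the state properties. Consistency $\omega(Y\otimes\mathbb{I})=\omega(\mathbb{I}\otimes Y)=\omega(Y)$ is immediate from the limit definition, since it only shifts $m$ or $p$; the same fact gives translation invariance. Positivity follows because each $\omega_N$ is a positive functional and positivity is preserved under limits. Normalization holds because $\omega(\mathbb{I})=\operatorname{Tr}_{\mathrm{so}}(\Phi_\infty)=1$; note that this also shows the finite-volume normalizations tend to $1$. Since $|\omega(Y)|\le\|Y\|$, the functional is bounded on the local algebra, so it extends uniquely by density to the quasi-local C$^*$-algebra.
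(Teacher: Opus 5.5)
Your proposal is correct and follows essentially the paper's route: the transfer-operator representation from Lemma~\ref{lem:hatY_properties}, the convergence $\Phi^{n}\to\Phi_{\infty}$ from Lemma~\ref{lem:aklt_conv}, and then $\omega(Y)=\tfrac12\Tr(\widehat{Y}(\mathbb{I}_2))$. You also make two improvements over the paper. First, the cyclicity step $\Tr_{\mathrm{so}}(\Phi^{m}\circ\widehat{Y}\circ\Phi^{p})=\Tr_{\mathrm{so}}(\Phi^{m+p}\circ\widehat{Y})$ settles the order-independence of the double limit more cleanly than the paper's iterated limits, and it makes the left/right ordering ambiguity in Lemma~\ref{lem:hatY_properties}(b) irrelevant. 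Second, you correctly notice that the definition of $\widehat{Y}$ produces $\Tr(A_{k_n}^{\dagger}\cdots A_{k_1}^{\dagger}A_{\ell_1}\cdots A_{\ell_n})$, whereas the paper simply writes the reversed product.

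That final matching is settled by taking adjoints and then using cyclicity, which gives $\Tr(A_{k_n}^{\dagger}\cdots A_{k_1}^{\dagger}A_{\ell_1}\cdots A_{\ell_n})=\overline{\Tr(A_{k_1}\cdots A_{k_n}A_{\ell_n}^{\dagger}\cdots A_{\ell_1}^{\dagger})}$, together with the fact that the AKLT matrices are real. Do not use the relabeling $k\leftrightarrow\ell$ you mention as an option, since it would replace $Y$ by its transpose.
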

\begin{proof}
Consider the finite chain of length \(N=m+n+p\) with periodic boundary conditions, and let \(Y\) act on the block of sites \(\{m+1,\dots,m+n\}\). Using the MPS representation of the AKLT ground state \(|\psi_{\mathrm{AKLT}}^{(N)}\rangle\) given in \eqref{eq:AKLT_MPS_finite} and contracting all physical indices outside the support of \(Y\), one obtains a transfer-operator expression for the finite-volume expectation. Lemma~\ref{lem:hatY_properties} yields the identity
\begin{equation}\label{eq:transfer_rep_second_linear_proof}
\langle\psi_{\mathrm{AKLT}}^{(N)}|
(\mathbb{I}^{\otimes m}\otimes Y\otimes\mathbb{I}^{\otimes p})
|\psi_{\mathrm{AKLT}}^{(N)}\rangle
=
\Tr_{\mathrm{so}}\!\bigl(
\Phi^{p}\circ\widehat{Y}\circ\Phi^{m}
\bigr),
\end{equation}
where \(\Phi\) is the AKLT transfer channel defined in \eqref{eq:aklt_channel} and \(\Tr_{\mathrm{so}}\) denotes the superoperator trace from \eqref{eq:superop_trace_def}. This expresses the finite-volume expectation entirely in terms of the   maps \(\Phi\) and \(\widehat{Y}\) acting on the auxiliary algebra \(\mathcal{B}(\mathcal{H}_{\frac{1}{2}})\).

By Lemma~\ref{lem:aklt_conv}, the channel \(\Phi\) is primitive and bistochastic, with unique invariant density matrix \(\rho_*=\tfrac12\mathbb{I}_2\), and its iterates converge in operator norm to the idempotent quantum channel
\[
\Phi_{\infty}(M)
=
\frac{\Tr(M)}{2}\,\mathbb{I}_2,
\qquad M\in\mathbb{M}_2(\mathbb{C}).
\]
Fix \(p\in\mathbb{N}\) and let \(m\to\infty\) in \eqref{eq:transfer_rep_second_linear_proof}. Continuity of composition in finite dimension implies that
\[
\lim_{m\to\infty}
\Phi^{p}\circ\widehat{Y}\circ\Phi^{m}(M)
=
\Phi^{p}\circ\widehat{Y}\circ\Phi_{\infty}(M),
\qquad M\in\mathbb{M}_2(\mathbb{C}).
\]
Now let \(p\to\infty\). Using again Lemma~\ref{lem:aklt_conv} and the fact that composition is continuous, we obtain
\[
\Phi^{p}\circ\widehat{Y}\circ\Phi_{\infty}(M)
\longrightarrow
\Phi_{\infty}\circ\widehat{Y}\circ\Phi_{\infty}(M),
\]
and the convergence of \(\Phi^{L}\) being uniform on the unit ball, the double limit
\[
\lim_{\substack{m\to\infty\\ p\to\infty}}\Phi^{p}\circ\widehat{Y}\circ\Phi^{m}(M)
\]
exists, is independent of the order in which \(m\) and \(p\) tend to infinity, and equals \(\Phi_{\infty}\circ\widehat{Y}\circ\Phi_{\infty}(M)\) for every \(M\in\mathbb{M}_2(\mathbb{C})\).
The explicit form of \(\Phi_{\infty}\) allows us to compute this composition in closed form. For each \(M\),
\[
\Phi_{\infty}\circ\widehat{Y}\circ\Phi_{\infty}(M)
=
\Phi_{\infty}\bigl(\widehat{Y}(\tfrac{\Tr(M)}{2}\mathbb{I}_2)\bigr)
=
\frac{\Tr(M)}{2}\,\Phi_{\infty}\bigl(\widehat{Y}(\mathbb{I}_2)\bigr)
=
\frac{\Tr(M)}{2}\cdot\frac{\Tr(\widehat{Y}(\mathbb{I}_2))}{2}\,\mathbb{I}_2
=
\frac{1}{4}\Tr(\widehat{Y}(\mathbb{I}_2))\,\Tr(M)\,\mathbb{I}_2.
\]
Thus the sequence of superoperators \(\Phi^{p}\circ\widehat{Y}\circ\Phi^{m}\) converges in norm to the rank-one map
\[
M\longmapsto\frac{1}{4}\Tr(\widehat{Y}(\mathbb{I}_2))\,\Tr(M)\,\mathbb{I}_2.
\]

Since \(\Tr_{\mathrm{so}}\) is continuous on the finite-dimensional space of superoperators, we may pass to the limit in \eqref{eq:transfer_rep_second_linear_proof} and define
\[
\omega(Y)
:=
\lim_{\substack{m\to\infty\\ p\to\infty}}
\langle\psi_{\mathrm{AKLT}}^{(N)}|
(\mathbb{I}^{\otimes m}\otimes Y\otimes\mathbb{I}^{\otimes p})
|\psi_{\mathrm{AKLT}}^{(N)}\rangle
=
\Tr_{\mathrm{so}}\!\bigl(\Phi_{\infty}\circ\widehat{Y}\circ\Phi_{\infty}\bigr).
\]
Evaluating the superoperator trace using \eqref{eq:superop_trace_def} and the preceding expression gives
\[
\Tr_{\mathrm{so}}\!\bigl(\Phi_{\infty}\circ\widehat{Y}\circ\Phi_{\infty}\bigr)
=
\frac{1}{2}\,\Tr\!\bigl(\widehat{Y}(\mathbb{I}_2)\bigr),
\]
so we arrive at the compact formula
\[
\omega(Y)=\frac{1}{2}\,\Tr\!\bigl(\widehat{Y}(\mathbb{I}_2)\bigr).
\]

The claimed trace expression \eqref{eq:omega_infinite_formula_second_linear} now follows by inserting the Kraus form \eqref{eq:hatY_def_lemma} of \(\widehat{Y}\) into \(\Tr(\widehat{Y}(\mathbb{I}_2))\). Indeed,
\[
\widehat{Y}(\mathbb{I}_2)
=
\sum_{k,\ell}
\langle k_1,\dots,k_n|Y|\ell_1,\dots,\ell_n\rangle\,
A_{k_1}\cdots A_{k_n}A_{\ell_n}^{\dagger}\cdots A_{\ell_1}^{\dagger},
\]
so
\[
\Tr\!\bigl(\widehat{Y}(\mathbb{I}_2)\bigr)
=
\sum_{k,\ell}
\langle k_1,\dots,k_n|Y|\ell_1,\dots,\ell_n\rangle\,
\Tr\!\bigl(A_{k_1}\cdots A_{k_n}A_{\ell_n}^{\dagger}\cdots A_{\ell_1}^{\dagger}\bigr),
\]
and substituting into \(\omega(Y)=\tfrac{1}{2}\Tr(\widehat{Y}(\mathbb{I}_2))\) yields \eqref{eq:omega_infinite_formula_second_linear}.

It remains to verify normalisation. For \(Y=\mathbb{I}_{\mathcal{H}_{1}}^{\otimes n}\) we have \(\langle k|Y|\ell\rangle=\delta_{k,\ell}\) and
\[
\omega(\mathbb{I}_{\mathcal{H}_{1}}^{\otimes n})
=
\frac{1}{2}\sum_{k_1,\dots,k_n}
\Tr\!\bigl(A_{k_1}\cdots A_{k_n}A_{k_n}^{\dagger}\cdots A_{k_1}^{\dagger}\bigr).
\]
Define
\[
X_n
:=
\sum_{k_1,\dots,k_n}
A_{k_1}\cdots A_{k_n}A_{k_n}^{\dagger}\cdots A_{k_1}^{\dagger}
\]
The AKLT gauge relation \(\sum_k A_k^{\dagger}A_k=\mathbb{I}_2\) implies by a straightforward induction that \(X_n=\mathbb{I}_2\) for all \(n\in\mathbb{N}\), and therefore
\[
\omega(\mathbb{I}_{\mathcal{H}_{1}}^{\otimes n})
=
\frac{1}{2}\Tr(X_n)
=
\frac{1}{2}\Tr(\mathbb{I}_2)
=
1.
\]
Thus \(\omega\) is a normalised positive functional on the local algebra, and by consistency extends uniquely to a state on the quasi-local tensor product \(\mathcal{B}(\mathcal{H}_{1})^{\otimes\mathbb{N}}\), as claimed.
\end{proof}

The extension of $\omega$ from local observables to the quasi-local algebra and the purity of $\omega$ follow from the general theory of injective finitely correlated states , applied to the triple constructed below.
To make the connection with the general FCS formalism precise, we recast the AKLT MPS as a C$^{*}$-finitely correlated state. The auxiliary algebra is chosen as $\mathcal{B} = \mathbb{M}_2(\mathbb{C})\cong\mathcal{B}(\mathcal{H}_{\frac{1}{2}})$, reflecting the bond dimension of the MPS. The distinguished element $e\in\mathcal{B}$ and the reference functional $\rho:\mathcal{B}\to\mathbb{C}$ are taken to be
\[
e = \mathbb{I}_2,
\qquad
\rho(B) = \Tr(B),
\quad
B\in\mathcal{B}.
\]
The physical observable algebra is $\mathcal{A}=\mathcal{B}(\mathcal{H}_{1})\cong\mathbb{M}_3(\mathbb{C})$. We define a transition expectation $\mathbb{E}:\mathcal{A}\otimes\mathcal{B}\to\mathcal{B}$ by
\begin{equation}\label{eq:E_def_AKLT}
\mathbb{E}(X\otimes Y)
=
\sum_{k,k'\in\{+,0,-\}}
\langle k'|Y|k\rangle\,A_k^{\dagger} X A_{k'},
\qquad
X\in\mathcal{B},\ Y\in\mathcal{A}.
\end{equation}
Complete positivity follows from the Kraus form, and the normalization
\[
\mathbb{E}(\mathbb{I}_{\mathcal{B}}\otimes\mathbb{I}_{\mathcal{A}})
=
\sum_{k} A_k^{\dagger} A_k
=
\mathbb{I}_2
\]
realises the FCS condition $\mathbb{E}(\mathbb{I}_{\mathcal{B}}\otimes e)=e$, which is equivalent to the standard MPS gauge constraint $\sum_k A_k^{\dagger}A_k=\mathbb{I}_2$ . For each local observable $Y\in\mathcal{A}$, we write $\mathbb{E}_Y:\mathcal{B}\to\mathcal{B}$ for the associated completely positive map
\[
\mathbb{E}_Y(X)
=
\mathbb{E}(X\otimes Y)
=
\sum_{k,k'} \langle k'|Y|k\rangle\,A_k^{\dagger} X A_{k'}.
\]
In particular, $\mathbb{E}_{\mathbb{I}_{\mathcal{A}}} = \Phi^{*}$ is the Heisenberg dual of the transfer channel.

With this notation, the expectation value of a local product observable $Y_1\otimes\cdots\otimes Y_m\in\mathcal{A}^{\otimes m}$ in the infinite-volume AKLT state can be written in the standard FCS form
\begin{equation}\label{eq:AKLT_FCS_expectation}
\omega_{\mathrm{AKLT}}(Y_1\otimes\cdots\otimes Y_m)
=
\rho\!\left(
\mathbb{E}_{Y_1}\circ\cdots\circ\mathbb{E}_{Y_m}(e)
\right)
=
\Tr\!\left(
\mathbb{E}_{Y_1}\circ\cdots\circ\mathbb{E}_{Y_m}(\mathbb{I}_2)
\right).
\end{equation}
Forumulae \eqref{eq:AKLT_FCS_expectation} shows that all local correlators are generated by a finite-dimensional quantum Markovian memory living in $\mathcal{B}$, propagated by the completely positive maps $\mathbb{E}_Y$ and “read out’’ by the trace functional $\rho$. In particular, the spectral properties of the transfer channel $\Phi$ control the decay of correlations and the emergence of a mass gap, in agreement with the general theory of finitely correlated states and their ergodic and mixing behaviour \cite{FNW92}.

This algebraic perspective clarifies the thermodynamic limit construction: the sequence $(\omega_n)_{n\in\mathbb{N}}$ obtained from the finite-volume MPS is replaced by a single state $\omega$ on the quasi-local algebra, completely determined by the FCS triple $(\mathcal{B},\mathbb{E},\rho)$. The AKLT model thus appears as a particularly simple, yet highly non-trivial, instance in which topological features (edge modes, string order, and symmetry-protected topological structure) coexist with an exactly solvable transfer channel and a finite-dimensional auxiliary system, providing an ideal testing ground for the interplay between entanglement, symmetry, and quantum Markovianity in one-dimensional quantum spin systems.

\section{Causal HQMM Realization of the AKLT State}\label{sec:HQMM_AKLT}

In this section we show in a fully explicit and rigorous way that the infinite-volume AKLT state constructed in Section~\ref{thm:AKLT_infinite_volume} is exactly the observation process of a causal  HQMM . The hidden system will be the virtual spin-\(\tfrac12\) space of the AKLT matrix product representation, and the emission mechanism will be generated by the AKLT tensors. This provides a concrete example of the general correspondence between matrix product states and HQMMs.

We keep all notation from the previous section: the local physical space is the spin‑1 Hilbert space \(\mathcal{H}_{1}\), the virtual (hidden) space is \(\mathcal{H}\cong\mathbb{C}^2\), the corresponding observable algebras are \(\mathcal{B}_O:=\mathcal{B}(\mathcal{H}_{1})\) and \(\mathcal{B}_H:=\mathcal{B}(\mathcal{H}_{\frac{1}{2}})\), and the AKLT tensors \(\{A_k\}_{k\in\{+,0,-\}}\subset\mathcal{B}_H\) and infinite-volume state \(\omega_{\mathrm{AKLT}}\) are defined as in Theorem~\ref{thm:AKLT_infinite_volume}.

To connect this structure to HQMMs, we first define the emission expectation, which encodes how a hidden operator and a physical observable combine into a new hidden observable. We set
\begin{equation}\label{eq:EOH_def_HQMM}
\mathcal{E}_{O,H}:\mathcal{B}_H\otimes\mathcal{B}_O\to\mathcal{B}_H,
\qquad
\mathcal{E}_{O,H}(X\otimes Y)
:=
\sum_{k,\ell\in\{+,0,-\}}\langle k|Y|\ell\rangle\,A_k X A_{\ell}^{\dagger},
\end{equation}
for \(X\in\mathcal{B}_H\), \(Y\in\mathcal{B}_O\), where \(\{|+\rangle,|0\rangle,|-\rangle\}\) is the fixed orthonormal basis of \(\mathcal{H}_{1}\). The Kraus form shows immediately that \(\mathcal{E}_{O,H}\) is completely positive. Unitality follows from the AKLT gauge constraint: using \(\sum_k|k\rangle\langle k|=\mathbb{I}_{\mathcal{H}_{1}}\) and \(\sum_k A_kA_k^{\dagger}=\mathbb{I}_{\mathcal{H}}\) we obtain
\[
\mathcal{E}_{O,H}(\mathbb{I}_2\otimes\mathbb{I}_{\mathcal{H}_{1}})
=
\sum_{k}A_k A_k^{\dagger}
=
\mathbb{I}_2,
\]
so \(\mathcal{E}_{O,H}\) is an emission expectation (see Appendix\ref{app:A} for a rigrous definition of HQMMs). This map is the natural HQMM analogue of the “transfer operator with insertion” used to compute expectation values in the AKLT finitely correlated state \cite{FNW92}.

 \begin{theorem}
\label{thm:causal-state-kraus}
Let $\Xi_{\mathrm{caus}} = \bigl(\phi_{H,1}; \{\mathcal{E}_{H;m}\}_{m=1}^{n}, \{\mathcal{E}_{H,O;m}\}_{m=1}^{n}, \{\mathcal{G}^{(m)}\}_{m=1}^{n}\bigr)$ be a causal HQMM defined on $\bigl(\mathcal{B}(\mathcal{H}_{\frac{1}{2}})\otimes \mathcal{B}(\mathcal{H}_{1})\bigr)^{\otimes \mathbb{N}}$. If the emission transition expectation $\mathcal{E}_{H,O;m}$ is given by \textnormal{(\ref{eq:EOH_def_HQMM})}, then the infinite-volume state $\Psi_{H,O}$ defined in \textnormal{\eqref{eq:caus-state-no-sup}} admits the explicit representation
\begin{align}
\Psi_{H,O}\!\left(\bigotimes_{m=1}^n (a_m\otimes b_m)\right)
&=
\sum_{\substack{k_1,\dots,k_n\\ \ell_1,\dots,\ell_n}}
\left(\prod_{m=1}^n \langle k_m|b_m|\ell_m\rangle\right)
\nonumber\\[0.3em]
&\quad\times
\phi_{H,1}\!\Bigl(
A_{k_1}\,
\mathcal{E}_{H;1}\Bigl(
a_1\,
A_{k_2}\,
\mathcal{E}_{H;2}\Bigl(
a_2\,
\cdots
A_{k_n}\,\mathcal{E}_{H;n}(a_n)\,
A_{\ell_n}^\dagger
\cdots
\Bigr)
A_{\ell_2}^\dagger
\Bigr)
A_{\ell_1}^\dagger
\Bigr).
\label{eq:psi-explicit-compact2}
\end{align}
for every $n\in\mathbb{N}$ and all $a_m\in \mathcal{B}(\mathcal{H}_{\frac{1}{2}})$, $b_m\in \mathcal{B}(\mathcal{H}_{1})$, $1\le m\le n$.
\end{theorem}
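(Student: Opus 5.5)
The plan is to unfold the defining formula \eqref{eq:caus-state-no-sup} of the causal HQMM state and then substitute the Kraus form \eqref{eq:EOH_def_HQMM} of the emission expectation step by step. From the definition in Appendix~\ref{app:A} (causal ordering), $\Psi_{H,O}$ on $\bigotimes_{m=1}^n(a_m\otimes b_m)$ is given by
\[
\phi_{H,1}\Bigl(\mathcal{T}_1\bigl(a_1\otimes b_1\otimes \mathcal{T}_2(a_2\otimes b_2\otimes\cdots\mathcal{T}_n(a_n\otimes b_n\otimes \mathbb{I})\cdots)\bigr)\Bigr),
\]
where $\mathcal{T}_m=\mathcal{T}^{(\rm caus)}_m(a\otimes b\otimes x)=\mathcal{E}_{H,O;m}\bigl(\mathcal{E}_{H;m}(a\otimes x)\otimes b\bigr)$. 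The maps $\mathcal{G}^{(m)}$ only enforce the projective consistency, i.e.\ the innermost hidden input is $\mathbb{I}$ (or is absorbed by unitality). I would first check this against the appendix, reading $\mathcal{E}_{H;m}(a\otimes x)$ in the compact notation of \eqref{eq:psi-explicit-compact2}. There $\mathcal{E}_{H;m}(a_m\,Z)$ stands for $\mathcal{E}_{H;m}(a_m\otimes Z)$, and $\mathcal{E}_{H;n}(a_n)$ stands for $\mathcal{E}_{H;n}(a_n\otimes\mathbb{I})$.

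Next I would argue by induction on the nesting depth, from the innermost level outward. Define recursively $X_{n+1}:=\mathbb{I}$ and $X_m:=\mathcal{E}_{H,O;m}\bigl(\mathcal{E}_{H;m}(a_m\otimes X_{m+1})\otimes b_m\bigr)$. Applying \eqref{eq:EOH_def_HQMM} at each step gives
\[
X_m=\sum_{k_m,\ell_m}\langle k_m|b_m|\ell_m\rangle\,A_{k_m}\,\mathcal{E}_{H;m}(a_m\otimes X_{m+1})\,A_{\ell_m}^\dagger .
\]
The induction hypothesis states that $X_m$ equals the multi-sum over $k_m,\dots,k_n,\ell_m,\dots,\ell_n$ of $\prod_{j\ge m}\langle k_j|b_j|\ell_j\rangle$ times the nested expression starting with $A_{k_m}$. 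The inductive step substitutes the expansion of $X_{m+1}$ into the formula for $X_m$. It then pulls the finite sums and the scalar coefficients $\langle k_j|b_j|\ell_j\rangle$ out of $\mathcal{E}_{H;m}$ and out of the left and right multiplications by $A_{k_m}$ and $A_{\ell_m}^\dagger$. This is legitimate because $\mathcal{E}_{H;m}$ is linear in its second argument. Finally, I apply the linear functional $\phi_{H,1}$ to $X_1$, which gives \eqref{eq:psi-explicit-compact2}.

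The last step is to pass from finite volume to the infinite-volume state. Here I would invoke the consistency built into the causal HQMM definition. Unitality of $\mathcal{E}_{H,O}$ (shown above via $\sum_k A_kA_k^\dagger=\mathbb{I}_2$) and unitality of $\mathcal{E}_{H;m}$ ensure that appending sites with $a=\mathbb{I}$, $b=\mathbb{I}$ leaves the expression unchanged. Indeed, $\mathcal{E}_{H,O}(\mathcal{E}_H(\mathbb{I}\otimes\mathbb{I})\otimes\mathbb{I})=\sum_kA_kA_k^\dagger=\mathbb{I}$. Hence the limit defining $\Psi_{H,O}$ is attained at level $n$.

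I expect the main obstacle to be matching the precise conventions of \eqref{eq:caus-state-no-sup}, namely the roles of $\mathcal{G}^{(m)}$, the boundary element at the end of the chain, and the tensor-leg ordering in $\mathcal{E}_{H;m}$. The algebra itself is routine multilinearity. If $\mathcal{G}^{(m)}$ turns out to insert a nontrivial boundary operator rather than $\mathbb{I}$, I would carry that element as $X_{n+1}$ through the induction and then use its compatibility condition to reduce it to $\mathbb{I}$ inside $\mathcal{E}_{H;n}$.
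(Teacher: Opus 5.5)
Your proposal is correct and follows essentially the same route as the paper. Your recursion $X_{n+1}=\mathbb{I}$, $X_m=\mathcal{G}^{(m)}_{a_m,b_m}(X_{m+1})$ is the paper's $T_{n-m+1}$, and the inner-to-outer induction substituting the Kraus form \eqref{eq:EOH_def_HQMM}, followed by applying $\phi_{H,1}$, is exactly its argument.

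Your $\mathcal{T}_m$ \emph{is} the block map $\mathcal{G}^{(m)}$ of Definition~\ref{CHQMMs}, not merely a consistency device. Your extra check $\mathcal{G}_{\mathbb{I},\mathbb{I}}(\mathbb{I})=\mathbb{I}$ for the infinite-volume extension is a harmless addition that the paper leaves implicit.
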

\begin{proof}[\textbf{Proof of Theorem \ref{thm:causal-state-kraus}}]
For each \(k = 0,1,\dots,n\), define the operator
\[
T_k := \mathcal{G}^{(n-k+1)}_{a_{n-k+1},b_{n-k+1}} \circ \cdots \circ \mathcal{G}^{(n)}_{a_n,b_n} (\mathbb{I}_{H;n+1}),
\]
with \(T_0 := \mathbb{I}_{H;n+1}\).  Clearly \(T_n = \mathcal{G}^{(1)}_{a_1,b_1}\circ\cdots\circ\mathcal{G}^{(n)}_{a_n,b_n}(\mathbb{I}_{H;n+1})\).

We prove by induction on \(k\) that
\[
T_k = \sum_{\substack{k_{n-k+1},\dots,k_n\\ \ell_{n-k+1},\dots,\ell_n}}
\Bigl(\prod_{m=n-k+1}^{n} \langle k_m|b_m|\ell_m\rangle\Bigr)
\,
A_{k_{n-k+1}}\,
\mathcal{E}_{H;n-k+1}\Bigl(
a_{n-k+1},\,
\cdots
A_{k_n}\,\mathcal{E}_{H;n}(a_n)\,A_{\ell_n}^\dagger
\cdots
\Bigr)
A_{\ell_{n-k+1}}^\dagger \tag{$\diamond$}
\]

For \(k = 0\), both sides equal \(\mathbb{I}_{H;n+1}\) by convention, establishing the base case.  Assume (1) holds for some \(k < n\).  Since \(T_{k+1} = \mathcal{G}^{(n-k)}_{a_{n-k},b_{n-k}}(T_k)\), the definition of \(\mathcal{G}^{(n-k)}\) together with the Kraus representation \eqref{eq:EOH_def_HQMM} gives
\[
T_{k+1} = \sum_{k_{n-k},\ell_{n-k}} \langle k_{n-k}|b_{n-k}|\ell_{n-k}\rangle \,
A_{k_{n-k}}\; \mathcal{E}_{H;n-k}\bigl( a_{n-k}\otimes T_k \bigr) \; A_{\ell_{n-k}}^\dagger
\]
Substituting the inductive expression for \(T_k\) and using linearity of \(\mathcal{E}_{H;n-k}\) yields exactly the right-hand side of ($\diamond$) with \(k\) replaced by \(k+1\).  Hence ($\diamond$) holds for all \(k = 0,\dots,n\) by induction.

Setting \(k = n\) in (1) provides an explicit formula for \(T_n\).  Applying the initial state \(\phi_{H,1}\) and the definition
\[
\Psi_{H,O}\Bigl(\bigotimes_{m=1}^n (a_m\otimes b_m)\Bigr) = \phi_{H,1}(T_n)
\]
produces precisely \eqref{eq:psi-explicit-compact2}. 
\end{proof}

In the AKLT-inspired HQMM framework of \cite{SA26}, the hidden space is taken as a qubit Hilbert space $\mathcal{H}\cong\mathbb{C}^2$ with orthonormal basis $\{|\uparrow\rangle,|\downarrow\rangle\}$, and the hidden transition expectation
\(
\mathcal{E}_{H} : \mathcal{B}(\mathcal{H}_{\frac{1}{2}}\otimes\mathcal{H}_{\frac{1}{2}})\to\mathcal{B}(\mathcal{H}_{\frac{1}{2}})
\)
is implemented via a partial isometry
\[
V:\mathcal{H}_{\frac{1}{2}}\longrightarrow\mathcal{H}\otimes\mathcal{H}_{\frac{1}{2}},
\qquad
\mathcal{E}_{H}(X)=V^\dagger X V,
\quad X\in\mathcal{B}(\mathcal{H}\otimes\mathcal{H}_{\frac{1}{2}}).
\]
To capture the AKLT valence-bond structure, one chooses
\[
\psi^-=\frac{1}{\sqrt{2}}\bigl(|\uparrow\downarrow\rangle-|\downarrow\uparrow\rangle\bigr),
\qquad
\psi^+=\frac{1}{\sqrt{2}}\bigl(|\uparrow\downarrow\rangle+|\downarrow\uparrow\rangle\bigr),
\]
and defines
\[
V|\uparrow\rangle=\psi^-,
\qquad
V|\downarrow\rangle=\psi^+,
\]
so that $V$ is an isometry ($V^\dagger V=\mathbb{I}_{\mathcal{H}}$) and its range is the subspace spanned by a maximally entangled singlet and a symmetric triplet, thereby encoding the AKLT virtual bonds in the hidden transition itself.

In our notation, the emission transition expectation
\(
\mathcal{E}_{H,O}:\mathcal{B}(\mathcal{H}_{\frac{1}{2}})\otimes\mathcal{B}(\mathcal{H}_{1})\to\mathcal{B}(\mathcal{H}_{\frac{1}{2}})
\)
admits the Kraus representation
\[
\mathcal{E}_{H,O}(x\otimes y)
=
\sum_{k,\ell} A_k\, x\, A_\ell^\dagger\,\langle k\mid y\mid\ell\rangle,
\qquad x\in\mathcal{B}(\mathcal{H}_{\frac{1}{2}}),\,y\in\mathcal{B}(\mathcal{H}_{1}),
\]
for the orthonormal basis  $\{|+\rangle,|0\rangle,|-\rangle\}$  of $\mathcal{H}_{1}$ and matrix units $\langle k\mid y\mid\ell\rangle$.

The \emph{conventional} HQMM “black-box’’ update associated with a pair $(a,b)\in\mathcal{B}(\mathcal{H}_{\frac{1}{2}})\times\mathcal{B}(\mathcal{H}_{1})$ is then
\[
\mathcal{F}_{a,b} : \mathcal{B}(\mathcal{H}_{\frac{1}{2}})\to\mathcal{B}(\mathcal{H}_{\frac{1}{2}}),
\qquad
\mathcal{F}_{a,b}(x)
=
\mathcal{E}_{H}\!\Bigl(
  \mathcal{E}_{H,O}\bigl(a\otimes b\bigr)
  \otimes x
\Bigr),
\]
and, with the above Kraus form and choice of $V$, can be written explicitly as
\begin{align*}
\mathcal{F}_{a,b}(x)
&=
V^\dagger\!\Biggl(
  \sum_{k,\ell} A_k\, a\, A_\ell^\dagger\,\langle k\mid y\mid\ell\rangle
  \;\otimes\; x
\Biggr)V\\
&=
\sum_{k,\ell}
\langle k\mid y\mid\ell\rangle
V^\dagger\bigl(A_k\, a\, A_\ell^\dagger\otimes x\bigr)V
\end{align*}
Here the entangling structure of $V$ enters only through the conjugation $V^\dagger(\cdot)V$, so $\mathcal{F}_{a,b}$ is a single CPTP map on $\mathcal{B}(\mathcal{H}_{\frac{1}{2}})$ in which the internal tensor structure of $\mathcal{E}_{H}$ is already “collapsed’’. By contrast, in the \emph{causal} HQMM architecture the basic block map associated with $(a,b)$ acts on $\mathcal{B}(\mathcal{H}_{\frac{1}{2}})$ as
\[
\mathcal{G}_{a,b}(x)
=
\mathcal{E}_{H,O}\bigl(\mathcal{E}_{H}(x\otimes a)\otimes b\bigr),
\qquad x\in\mathcal{B}(\mathcal{H}_{\frac{1}{2}}),
\]
so that the hidden entangling step \(\mathcal{E}_{H}\) precedes the emission. Using the explicit choice of $V$ and the Kraus form of $\mathcal{E}_{H,O}$, we obtain
\[
\mathcal{G}_{a,b}(x)
=
\sum_{k,\ell}
A_k\,
\Bigl[\,V^\dagger\bigl(x\otimes a\bigr)V\,\Bigr]\,
A_\ell^\dagger\;\langle k,b\,\ell\rangle
=
\sum_{k,\ell}
\langle k,b\,\ell\rangle\;
A_k\,V^\dagger\bigl(x\otimes a\bigr)V\,A_\ell^\dagger.
\]
Here the hidden entanglement created by the map $x\mapsto V^\dagger(x\otimes a)V$ remains visible \emph{inside} the Kraus expansion at each step, and the sequential composition of such $\mathcal{G}_{a,b}$–maps produces the full nested expression of the causal state $\Psi_{H,O}$.

Thus, for the same AKLT-type choice of $V$, the conventional HQMM encodes the hidden entanglement only through the \emph{effective} channel $\mathcal{F}_{a,b}$, whereas the causal HQMM retains the full tensor-product and entangling structure of $\mathcal{E}_{H}$ under the $\mathcal{G}_{a,b}$ maps. This structural difference indicates that the two models need not be stochastically equivalent; in particular, for AKLT-type choices of $V$ one expects that the family of observable processes generated by iterates of $(\mathcal{G}_{a,b})_{a,b}$ is strictly richer than that generated by $(\mathcal{F}_{a,b})_{a,b}$, a question that can be investigated rigorously in future work.

In the AKLT–HQMM realization we fix the hidden algebra as $\mathcal{B}_H=\mathcal{B}(\mathcal{H}_{\frac{1}{2}})\cong\mathbb{M}_2(\mathbb{C})$ and specify the hidden transition expectation as the rank–one, unital, completely positive map
\begin{equation}\label{eq:E_H_def}
\mathcal{E}_{H}:\mathcal{B}_H\otimes\mathcal{B}_H\longrightarrow\mathcal{B}_H,
\qquad
\mathcal{E}_{H}(X\otimes Z)
=
\tfrac{1}{2}\,\operatorname{Tr}(X)\,Z.
\end{equation}
Thus $\mathcal{E}_H$ realises the canonical conditional expectation onto $\mathbb{C}\mathbb{I}_2\subset\mathcal{B}_H$ in the first tensor factor, followed by the identity on the second. Equivalently, its dual channel maps every normal state on $\mathcal{B}_H$ to the maximally mixed state $\rho_*=\tfrac{1}{2}\mathbb{I}_2$, so that the hidden dynamics is maximally mixing in a single step. From the MPS perspective, any primitive unital channel with faithful invariant state $\rho_*$ is equivalent, up to a gauge transformation on the virtual space, to this choice; \eqref{eq:E_H_def} is therefore a convenient normal form in which the entire nontrivial structure of the AKLT chain is encoded in the emission part.

The emission transition expectation
\[
\mathcal{E}_{H,O}:\mathcal{B}_H\otimes\mathcal{B}_O\to\mathcal{B}_H
\]
is fixed by \eqref{eq:EOH_def_HQMM} in terms of the AKLT Kraus operators $\{A_k\}_{k\in\{+,0,-\}}$. For each site $t$ and each pair $(X_t,Y_t)\in\mathcal{B}_H\times\mathcal{B}_O$ we obtain a Heisenberg one–step block map on the hidden algebra
\begin{equation}\label{eq:G_block_def-prep}
\mathcal{G}_{X_t,Y_t}:\mathcal{B}_H\to\mathcal{B}_H,
\qquad
\mathcal{G}_{X_t,Y_t}(Z)
=
\mathcal{E}_{H,O}\bigl(\mathcal{E}_H(X_t\otimes Z)\otimes Y_t\bigr),
\end{equation}
which, after inserting \eqref{eq:E_H_def} and \eqref{eq:EOH_def_HQMM}, takes the explicit Kraus form
\begin{equation}\label{eq:G_single_step_def-prep}
\mathcal{G}_{X_t,Y_t}(Z)
=
\tfrac{1}{2}\,\operatorname{Tr}(X_t)
\sum_{k_t,\ell_t\in\{+,0,-\}}
\langle \ell_t|Y_t|k_t\rangle\,
A_{k_t}\,Z\,A_{\ell_t}^{\dagger},
\qquad Z\in\mathcal{B}_H.
\end{equation}
In particular, the family $\{\mathcal{G}_{X_t,Y_t}\}_{t\ge 1}$ realises a completely positive “transfer operator’’ on the virtual space whose Kraus operators are precisely the AKLT tensors, weighted by the matrix elements of the local observables. Iterated compositions of these maps will implement the causal HQMM dynamics and, after suitable boundary choices and restriction to $\mathcal{B}_O^{\otimes n}$, will produce a functional that can be compared directly with the AKLT finitely correlated state of Theorem~\ref{thm:AKLT_infinite_volume}.

\begin{theorem}\label{thm:AKLT_is_observation_rigorous}
Let \(\mathcal{B}_H=\mathcal{B}(\mathcal{H}_{\frac{1}{2}})\cong\mathbb{M}_2(\mathbb{C})\) and \(\mathcal{B}_O=\mathcal{B}(\mathcal{H}_{1})\cong\mathbb{M}_3(\mathbb{C})\), and define the hidden transition expectation \(\mathcal{E}_H\) by \eqref{eq:E_H_def} and the emission expectation \(\mathcal{E}_{O,H}\) by \eqref{eq:EOH_def_HQMM} in terms of the AKLT tensors \(\{A_k\}\) from Section~\ref{thm:AKLT_infinite_volume}. Consider the causal HQMM whose one-step block maps \(\mathcal{G}_{X_t,Y_t}\) are given by \eqref{eq:G_single_step_def-prep} and whose initial hidden state and boundary operator are chosen as \(\phi_{H,0}=\Tr\) and \(h=\mathbb{I}_2\). Then, for every local observable \(Y\), the observation process satisfies
\begin{equation}\label{eq:omega_AKLT_MPS_factorized}
\psi_{O}(Y)=\omega(Y)
=
\frac{1}{2}\sum_{\substack{k_1,\dots,k_n\\ \ell_1,\dots,\ell_n}}
\Bigl(\prod_{m=1}^{n}\langle \ell_m|b_m|k_m\rangle\Bigr)
\Tr\!\bigl(
 A_{k_1}\cdots A_{k_n}A_{\ell_n}^{\dagger}\cdots A_{\ell_1}^{\dagger}
\bigr).
\end{equation}
where \(\omega_{\mathrm{AKLT}}\) is the infinite-volume AKLT state. In particular, the AKLT state is the observation process of a causalHQMM.
\end{theorem}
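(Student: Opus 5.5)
The plan is to evaluate the causal HQMM state on a product observable whose hidden components are all identities, and to compare the result with the closed formula of Theorem~\ref{thm:AKLT_infinite_volume}. Concretely, fix a local observable \(Y=b_1\otimes\cdots\otimes b_n\in\mathcal{B}_O^{\otimes n}\); general \(Y\) follows by linearity. By definition of the observation process, \(\psi_O(Y)=\Psi_{H,O}\bigl(\bigotimes_{m=1}^n(\mathbb{I}_2\otimes b_m)\bigr)\). For this choice Theorem~\ref{thm:causal-state-kraus} supplies the nested Kraus expansion, with \(a_m=\mathbb{I}_2\) and \(\mathcal{E}_{H;m}=\mathcal{E}_H\) as in \eqref{eq:E_H_def}. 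Equivalently, I will work directly with the composition
\[
\psi_O(Y)=\phi_{H,0}\bigl(\mathcal{G}_{\mathbb{I}_2,b_1}\circ\cdots\circ\mathcal{G}_{\mathbb{I}_2,b_n}(h)\bigr),
\]
using the block maps \eqref{eq:G_single_step_def-prep}, the boundary operator \(h=\mathbb{I}_2\), and the initial functional \(\phi_{H,0}=\Tr\).

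\textbf{Key step (collapsing the hidden transition).} Since \(\tfrac12\Tr(\mathbb{I}_2)=1\), each block map with \(X_t=\mathbb{I}_2\) reduces to the pure emission map
\[
\mathcal{G}_{\mathbb{I}_2,b}(Z)=\sum_{k,\ell}\langle \ell|b|k\rangle A_kZA_\ell^\dagger .
\]
The hidden transition therefore contributes only a trivial factor, and the whole dynamics is carried by the AKLT Kraus operators.

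\textbf{Induction.} I will show by induction on the number of composed maps, as in the proof of Theorem~\ref{thm:causal-state-kraus}, that
\[
\mathcal{G}_{\mathbb{I}_2,b_1}\circ\cdots\circ\mathcal{G}_{\mathbb{I}_2,b_n}(\mathbb{I}_2)=\sum_{k,\ell}\prod_m\langle \ell_m|b_m|k_m\rangle\,A_{k_1}\cdots A_{k_n}A_{\ell_n}^\dagger\cdots A_{\ell_1}^\dagger .
\]
Applying \(\Tr\) then gives the trace expression in \eqref{eq:omega_AKLT_MPS_factorized}.

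\textbf{Normalization (main obstacle).} What remains is to reconcile this with the factor \(\tfrac12\), and this is where I expect the real difficulty. With \(\phi_{H,0}=\Tr\) unnormalized, one gets \(\psi_O(\mathbb{I})=\Tr(X_n)=2\), using \(X_n=\mathbb{I}_2\) from the normalization argument in the proof of Theorem~\ref{thm:AKLT_infinite_volume}. So the first thing I would try is to read the initial hidden state as the normalized trace \(\tfrac12\Tr\), that is, the invariant state \(\rho_*\) of Lemma~\ref{lem:aklt_conv}. Alternatively, I would absorb the factor by letting the boundary step carry one extra \(\mathcal{E}_H\) contraction \(\tfrac12\Tr(\cdot)\). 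Either way, the positive normalized state \(\tfrac12\Tr\) is the one needed for \(\psi_O\) to be a state.

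\textbf{Index conventions.} A second point of care is the index convention. Formula \eqref{eq:omega_infinite_formula_second_linear} uses \(\langle k|Y|\ell\rangle\), whereas \eqref{eq:omega_AKLT_MPS_factorized} uses \(\langle \ell_m|b_m|k_m\rangle\). I will check that the \(\mathcal{G}\) convention in \eqref{eq:G_single_step_def-prep}, with \(\langle\ell|Y|k\rangle\) paired to \(A_kZA_\ell^\dagger\), gives exactly the target formula; relabeling \(k\leftrightarrow\ell\) and using cyclicity of the trace should handle any mismatch. Then \(\psi_O(Y)\) coincides with \(\tfrac12\Tr(\widehat{Y}(\mathbb{I}_2))\), which Theorem~\ref{thm:AKLT_infinite_volume} identifies with \(\omega(Y)\).

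\textbf{Extension to the quasi-local algebra.} Equality on all local observables, together with the fact that both \(\psi_O\) and \(\omega\) are states, extends the identity by continuity to the quasi-local algebra \(\mathcal{B}(\mathcal{H}_1)^{\otimes\mathbb{N}}\). Consistency of \(\psi_O\) under padding by identities, \(Y\mapsto Y\otimes\mathbb{I}\), follows from \(\mathcal{G}_{\mathbb{I}_2,\mathbb{I}}(\mathbb{I}_2)=\sum_kA_kA_k^\dagger=\mathbb{I}_2\).
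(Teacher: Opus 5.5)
Your proposal is correct and follows essentially the paper's route: collapse each hidden factor via $\tfrac12\Tr(\mathbb{I}_2)=1$, iterate the AKLT Kraus form, and take the normalized trace as the initial state. You are also right that the literal $\phi_{H,0}=\Tr$ in the statement produces a factor $2$; the paper's proof likewise switches to $\phi_{H,1}=\tfrac12\Tr$. Your index matching is actually sounder than the paper's: relabeling $k\leftrightarrow\ell$ plus cyclicity turns the displayed formula exactly into $\tfrac12\Tr(\widehat{Y}(\mathbb{I}_2))$, the compact form of $\omega$ obtained in the proof of Theorem~\ref{thm:AKLT_infinite_volume}. Matching the closed form \eqref{eq:omega_infinite_formula_second_linear} directly would instead need the reality of the $A_k$, not cyclicity; the paper asserts $\prod_m\langle k_m|b_m|\ell_m\rangle=\prod_m\langle \ell_m|b_m|k_m\rangle$ there, which fails for non-symmetric $b_m$.
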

\begin{proof}
We work in the setting of Theorems~\ref{thm:AKLT_infinite_volume} and \ref{thm:causal-state-kraus}. The hidden algebra is $\mathcal{B}_H=\mathcal{B}(\mathcal{H}_{\frac{1}{2}})\cong\mathbb{M}_2(\mathbb{C})$ and the observable algebra is $\mathcal{B}_O=\mathcal{B}(\mathcal{H}_{1})\cong\mathbb{M}_3(\mathbb{C})$, where $\mathcal{H}_{1}$ carries the orthonormal basis $\{|+\rangle,|0\rangle,|-\rangle\}$ and the AKLT tensors $\{A_k\}_{k\in\{+,0,-\}}$ from \eqref{eq:Ak} and Theorem~\ref{thm:AKLT_infinite_volume}.  The hidden transition expectation $\mathcal{E}_H$ and the emission expectation $\mathcal{E}_{H,O}$ are given by \eqref{eq:E_H_def} and \eqref{eq:EOH_def_HQMM}, and we choose as initial hidden state the normalised trace
\[
\phi_{H,1}=\tfrac{1}{2}\Tr:\mathcal{B}_H\to\mathbb{C},
\]
together with the boundary operator $h=\mathbb{I}_2$. With these data we obtain a causal HQMM
\[
\Xi_{\mathrm{caus}}
=
\bigl(\phi_{H,1};\{\mathcal{E}_{H;m}\}_{m=1}^{n},\{\mathcal{E}_{H,O;m}\}_{m=1}^{n},\{\mathcal{G}^{(m)}\}_{m=1}^{n}\bigr)
\]
in the sense of Theorem~\ref{thm:causal-state-kraus}, with $\mathcal{E}_{H;m}=\mathcal{E}_H$ and $\mathcal{E}_{H,O;m}=\mathcal{E}_{H,O}$ for all $m$.

For any $n\in\mathbb{N}$ and factorised input $\bigotimes_{m=1}^n(a_m\otimes b_m)$, with $a_m\in\mathcal{B}_H$ and $b_m\in\mathcal{B}_O$, Theorem~\ref{thm:causal-state-kraus} yields the Kraus representation
\begin{align}
\Psi_{H,O}\!\left(\bigotimes_{m=1}^n (a_m\otimes b_m)\right)
&=
\sum_{\substack{k_1,\dots,k_n\\ \ell_1,\dots,\ell_n}}
\left(\prod_{m=1}^n \langle k_m|b_m|\ell_m\rangle\right)
\nonumber\\[0.3em]
&\quad\times
\phi_{H,1}\!\Bigl(
A_{k_1}\,
\mathcal{E}_{H;1}\Bigl(
a_1\,
A_{k_2}\,
\mathcal{E}_{H;2}\Bigl(
a_2\,
\cdots
A_{k_n}\,\mathcal{E}_{H;n}(a_n)\,
A_{\ell_n}^\dagger
\cdots
\Bigr)
A_{\ell_2}^\dagger
\Bigr)
A_{\ell_1}^\dagger
\Bigr)
\label{eq:psi-explicit-compact2-proof}
\end{align}
where all $\mathcal{E}_{H;m}$ are identified with $\mathcal{E}_H$.

Since $\mathcal{E}_H$ has the product form $\mathcal{E}_H(a_m\otimes x)=\tfrac{1}{2}\Tr(a_m)\,x$ and $\phi_{H,1}=\tfrac{1}{2}\Tr$, the nested term inside $\phi_{H,1}$ can be evaluated explicitly by iterating this identity from $m=1$ up to $m=n$. A direct computation gives
\begin{align}
&\phi_{H,1}\!\Bigl(
A_{k_1}\,
\mathcal{E}_{H;1}\Bigl(
a_1\,
A_{k_2}\,
\mathcal{E}_{H;2}\Bigl(
a_2\,
\cdots
A_{k_n}\,\mathcal{E}_{H;n}(a_n\otimes\mathbb{I}_2)\,
A_{\ell_n}^\dagger
\cdots
\Bigr)
A_{\ell_2}^\dagger
\Bigr)
A_{\ell_1}^\dagger
\Bigr)\nonumber\\
&=
\frac{1}{2}\Tr\Bigl(
A_{k_1}\,
\mathcal{E}_{H;1}\bigl(\cdots\bigr)
A_{\ell_1}^\dagger
\Bigr)
\nonumber\\[0.3em]
&=
\frac{1}{2^2}\Tr(a_1)\,
\Tr\Bigl(
A_{k_1}A_{k_2}\,
\mathcal{E}_{H;2}\bigl(\cdots\bigr)
A_{\ell_2}^\dagger A_{\ell_1}^\dagger
\Bigr)
\nonumber\\
&\;\;\vdots\nonumber\\
&=
\frac{1}{2^{n+1}}
\Bigl(\prod_{m=1}^n \Tr(a_m)\Bigr)
\Tr\Bigl(
A_{k_1}\cdots A_{k_n}
A_{\ell_n}^\dagger\cdots A_{\ell_1}^\dagger
\Bigr)
\label{eq:phiH1-iteration-AKLT-clean}
\end{align}
where at each step $\mathcal{E}_{H;m}$ contributes a factor $\tfrac{1}{2}\Tr(a_m)$ and the final action of $\phi_{H,1}$ contributes the extra $\tfrac{1}{2}$.
Substituting \eqref{eq:phiH1-iteration-AKLT-clean} into \eqref{eq:psi-explicit-compact2-proof}, we obtain the closed Kraus expansion
\begin{align}
\Psi_{H,O}\!\left(\bigotimes_{m=1}^n (a_m\otimes b_m)\right)
&=
\tfrac{1}{2^{n+1}}
\sum_{\substack{k_1,\dots,k_n\\ \ell_1,\dots,\ell_n}}
\Bigl(\prod_{m=1}^{n}\Tr(a_m)\,\langle k_m|b_m|\ell_m\rangle\Bigr)
\nonumber\\[0.3em]
&\quad\times
\Tr\!\bigl(A_{k_1}\cdots A_{k_n} A_{\ell_n}^{\dagger}\cdots A_{\ell_1}^{\dagger}\bigr)
\label{eq:psi-HO-closed-iteration-clean}
\end{align}

The observation process is obtained by restricting to the observable tensor factors, that is,
\[
\psi_{O}(Y)
:=
\Psi_{H,O}(\mathbb{I}_2^{\otimes n}\otimes Y),
\qquad
Y=b_1\otimes\cdots\otimes b_n\in\mathcal{B}_O^{\otimes n}.
\]
Setting $a_m=\mathbb{I}_2$ in \eqref{eq:psi-HO-closed-iteration-clean} gives $\Tr(a_m)=2$ for all $m$ and hence
\[
\tfrac{1}{2^{n+1}}\prod_{m=1}^{n}\Tr(a_m)
=
\tfrac{1}{2^{n+1}}\,2^{n}
=
\tfrac{1}{2},
\]
so we obtain
\begin{equation}\label{eq:psi_O_closed_AKLT_final-clean}
\psi_{O}(Y)
=
\tfrac{1}{2}\sum_{\substack{k_1,\dots,k_n\\ \ell_1,\dots,\ell_n}}
\Bigl(\prod_{m=1}^{n}\langle k_m|b_m|\ell_m\rangle\Bigr)
\Tr\!\bigl(A_{k_1}\cdots A_{k_n} A_{\ell_n}^{\dagger}\cdots A_{\ell_1}^{\dagger}\bigr).
\end{equation}

By Theorem~\ref{thm:AKLT_infinite_volume}, the infinite-volume AKLT state $\omega=\omega_{\mathrm{AKLT}}$ is, for any local $Y\in\mathcal{B}(\mathcal{H}_{1})^{\otimes n}$, given by the MPS formula \eqref{eq:omega_infinite_formula_second_linear}. For factorised $Y=b_1\otimes\cdots\otimes b_n$, the matrix element factorises as
\[
\langle k_1,\dots,k_n|Y|\ell_1,\dots,\ell_n\rangle
=
\prod_{m=1}^n \langle k_m|b_m|\ell_m\rangle
=
\prod_{m=1}^n \langle \ell_m|b_m|k_m\rangle,
\]
so \eqref{eq:omega_infinite_formula_second_linear} reduces to exactly the same expression as the right-hand side of \eqref{eq:psi_O_closed_AKLT_final-clean}. Hence
\[
\psi_{O}(Y)=\omega(Y)
\]
for all factorised local observables $Y$, and therefore, by linearity and density, for all local $Y\in\mathcal{B}(\mathcal{H}_{1})^{\otimes n}$. The consistency of the quasi-local construction in Theorem~\ref{thm:AKLT_infinite_volume} then implies that this equality holds on the full quasi-local C\(^*\)-algebra $\mathcal{B}(\mathcal{H}_{1})^{\otimes\mathbb{N}}$.

We conclude that the infinite-volume AKLT finitely correlated state $\omega_{\mathrm{AKLT}}$ is exactly the observation process of the causal HQMM specified by \eqref{eq:E_H_def}, \eqref{eq:EOH_def_HQMM} and $\phi_{H,1}=\tfrac{1}{2}\Tr$. This proves the theorem.
\end{proof}

This theorem realises the AKLT finitely correlated state as the output of a finite-dimensional quantum Markovian memory—the virtual spin-\(\tfrac12\) system—driven by completely positive maps determined by the AKLT tensors.

\section{Discussion and Future Directions}\label{sect-Disc}
In this work, we have uncovered a HQMMs structure underlying the AKLT spin‑1 chain by showing that its FCS representation can be realised as the observation process of a suitably constructed causal HQMM  \cite{SouBarhqmm2026}. Meanwhile, the AKLT state cannot be realized as the observation process within either the conventional HQMM architecture \cite{SA26} or the EHMM framework \cite{Sou25}. This highlights that causal HQMMs possess a richer and more flexible structure, enabling the representation of quantum states with nontrivial entanglement patterns and topological order.

These findings point to several concrete directions for future work. One is to systematically investigate the general correspondence between HQMMs and tensor-network states, extending the MPS–EHMM result to broader hierarchies of HQMMs and to finitely correlated states used to model topological and SPT phases. Another is to classify when two HQMMs that generate the same observable state are equivalent in an appropriate stochastic or quasi-local sense, and how invariants such as correlation length, edge modes, and SPT indices are reflected at the hidden level. More broadly, our results suggest that faithfully representing SPT phases within hidden Markovian frameworks may require models with non-local memory, symmetry-twisted transition expectations, or explicitly topological hidden spaces, and they open a path towards understanding how far an extended stochastic formalism can go in capturing topological order before genuinely quantum many-body features become irreducible.


%

\appendix
\section{Appendix: Deferred Proofs}\label{app:B}

We provide the proofs of some technical lemmas for completeness.


\begin{proof}[\textbf{Proof of Lemma\ref{lem:aklt_conv}}]
Using the explicit form of the AKLT matrices $A_k$ in \eqref{eq:Ak}, one verifies by direct computation that the Pauli matrices are eigenoperators of $\Phi$: for each $\alpha\in\{x,y,z\}$,
\[
\Phi(\sigma_\alpha)
=
\sum_k A_k \sigma_\alpha A_k^{\dagger}
=
-\tfrac{1}{3}\,\sigma_\alpha,
\]
while $\Phi(\mathbb{I}_2)=\mathbb{I}_2$ as already observed. Thus the set $\{\mathbb{I}_2,\sigma_x,\sigma_y,\sigma_z\}$ forms an eigenbasis of $\Phi$ as a linear map on $\mathbb{M}_2(\mathbb{C})$, with eigenvalues $1$ on $\mathbb{I}_2$ and $-1/3$ on the three Pauli directions.

Every matrix $M\in\mathbb{M}_2(\mathbb{C})$ admits a unique decomposition in this basis,
\[
M
=
\tfrac{\operatorname{Tr}(M)}{2}\,\mathbb{I}_2
+
\sum_{\alpha\in\{x,y,z\}} c_{\alpha}\,\sigma_{\alpha},
\]
for suitable coefficients $c_{\alpha}\in\mathbb{C}$. Applying $\Phi^{n}$ and using the eigen-relations just established, we find
\[
\Phi^{n}(M)
=
\tfrac{\operatorname{Tr}(M)}{2}\,\Phi^{n}(\mathbb{I}_2)
+
\sum_{\alpha} c_{\alpha}\,\Phi^{n}(\sigma_{\alpha})
=
\tfrac{\operatorname{Tr}(M)}{2}\,\mathbb{I}_2
+
\sum_{\alpha} c_{\alpha}\,\bigl(-\tfrac{1}{3}\bigr)^{n}\sigma_{\alpha}
\]
for all $n\in\mathbb{N}$. Since $|-1/3|<1$, the second term converges to $0$ in operator norm as $n\to\infty$, which proves the limit \eqref{eq:Phi_power_limit}. The modulus of the sub-leading eigenvalue is $1/3$, so the convergence is exponential with rate $1/3$.
\end{proof}


\begin{proof}[\textbf{Proof of Lemma\ref{lem:hatY_properties}}]
 For (a),  using this expansion and the MPS form of \(|\psi_{\mathrm{AKLT}}^{(n)}\rangle\), the expectation value can be written as
\begin{align*}
\langle\psi_{\mathrm{AKLT}}^{(n)}|Y|\psi_{\mathrm{AKLT}}^{(n)}\rangle
&=
\sum_{k,\ell}
\langle k_1,\dots,k_n|Y|\ell_1,\dots,\ell_n\rangle\,
\operatorname{Tr}(A_{k_1}\cdots A_{k_n})\,
\overline{\operatorname{Tr}(A_{\ell_1}\cdots A_{\ell_n})},
\end{align*}
where we abbreviate multi-indices \(k=(k_1,\dots,k_n)\), \(\ell=(\ell_1,\dots,\ell_n)\). Using \(\overline{\operatorname{Tr}(X)}=\operatorname{Tr}(X^{\dagger})\), this becomes
\[
\langle\psi_{\mathrm{AKLT}}^{(n)}|Y|\psi_{\mathrm{AKLT}}^{(n)}\rangle
=
\sum_{k,\ell}
\langle k|Y|\ell\rangle\,
\operatorname{Tr}\bigl(A_{k_n}^{\dagger}\cdots A_{k_1}^{\dagger}\bigr)\operatorname{Tr}\bigl(A_{\ell_1}\cdots A_{\ell_n}\bigr).
\]
We now express the trace over \(\mathcal{H}\) in the basis \(\{|\alpha\rangle\}\),
\[
\operatorname{Tr}\bigl(A_{k_n}^{\dagger}\cdots A_{k_1}^{\dagger}\bigr)\operatorname{Tr}\bigl(A_{\ell_1}\cdots A_{\ell_n}\bigr)
=
\sum_{\alpha=1}^{2}
\langle\alpha|A_{k_n}^{\dagger}\cdots A_{k_1}^{\dagger}|\alpha\rangle\langle \beta A_{\ell_1}\cdots A_{\ell_n}|\alpha\rangle.
\]

On the other hand, by the definition \eqref{eq:hatY_def_lemma} of \(\widehat{Y}\), we have for each rank-one operator \(|\alpha\rangle\langle\beta|\),
\[
\widehat{Y}(|\alpha\rangle\langle\beta|)
=
\sum_{k,\ell}
\langle k|Y|\ell\rangle\,
A_{k_n}^{\dagger}\cdots A_{k_1}^{\dagger}|\alpha\rangle\langle\beta|A_{\ell_1}\cdots A_{\ell_n}.
\]
Taking the matrix element \(\langle\alpha|\cdot|\beta\rangle\) yields
\[
\langle\alpha|\widehat{Y}(|\alpha\rangle\langle\beta|)|\beta\rangle
=
\sum_{k,\ell}
\langle k|Y|\ell\rangle\,
\langle\alpha|A_{k_n}^{\dagger}\cdots A_{k_1}^{\dagger}|\alpha\rangle\,
\langle\beta|A_{\ell_1}\cdots A_{\ell_n}|\beta\rangle.
\]
Summing over \(\alpha,\beta\) and using completeness, we obtain
\begin{align*}
\operatorname{Tr}_{\mathrm{so}}(\widehat{Y})
&=
\sum_{\alpha,\beta}
\langle\alpha|\widehat{Y}(|\alpha\rangle\langle\beta|)|\beta\rangle \\
&=
\sum_{k,\ell}\langle k|Y|\ell\rangle
\sum_{\alpha,\beta}
\langle\alpha|A_{k_n}^{\dagger}\cdots A_{k_1}^{\dagger}|\alpha\rangle\,
\langle\beta|A_{\ell_1}\cdots A_{\ell_n}|\beta\rangle \\
&=
\sum_{k,\ell}\langle k|Y|\ell\rangle\,
\operatorname{Tr}\bigl(A_{k_n}^{\dagger}\cdots A_{k_1}^{\dagger}A_{\ell_1}\cdots A_{\ell_n}\bigr),
\end{align*}
which coincides with the expression for \(\langle\psi_{\mathrm{AKLT}}^{(n)}|Y|\psi_{\mathrm{AKLT}}^{(n)}\rangle\) above. This proves \eqref{eq:AKLT_expectation_hatY_correct}.   For the special case \(Y=\mathbb{I}^{\otimes n}\), the matrix elements are \(\langle k|Y|\ell\rangle=\delta_{k,\ell}\), so \eqref{eq:hatY_def_lemma} simplifies to
\[
\widehat{\mathbb{I}^{\otimes n}}(M)
=
\sum_{k_1,\dots,k_n}
A_{k_n}^{\dagger}\cdots A_{k_1}^{\dagger}\,M\,A_{k_1}\cdots A_{k_n}.
\]
Using the Bistochasticity  of the single-step AKLT  transfer channel \(\Phi(Z)=\sum_k A_k Z A_k^{\dagger} = \sum_k A_k^{\dagger} Z A_k\), one checks inductively that
\[
\Phi^{n}(M)
=
\sum_{k_1,\dots,k_n}
A_{k_n}^{\dagger}\cdots A_{k_1}^{\dagger}\,M\,A_{k_1}\cdots A_{k_n},
\]
so \(\widehat{\mathbb{I}^{\otimes n}}=\Phi^{n}\). The equality
\[
\langle\psi_{\mathrm{AKLT}}^{(n)}|\psi_{\mathrm{AKLT}}^{(n)}\rangle
=
\operatorname{Tr}_{\mathrm{so}}(\Phi^{n})
\]
follows by applying \eqref{eq:AKLT_expectation_hatY_correct} with \(Y=\mathbb{I}^{\otimes n}\).

For (b), assume that \(Y\in\mathcal{B}(\mathcal{H}_{1})^{\otimes n}\) acts on sites \(1,\dots,n\) and \(Z\in\mathcal{B}(\mathcal{H}_{1})^{\otimes m}\) on sites \(n+1,\dots,n+m\), so that \(Y\otimes Z\) acts on \(n+m\) sites. Introducing multi-indices \(k=(k_1,\dots,k_n)\), \(r=(r_1,\dots,r_m)\), and similarly \(\ell,s\), the matrix elements factorise as
\[
\langle k,r|Y\otimes Z|\ell,s\rangle
=
\langle k|Y|\ell\rangle\,\langle r|Z|s\rangle.
\]
By the definition of \(\widehat{\cdot}\),
\begin{align*}
\widehat{Y\otimes Z}(M)
&=
\sum_{k,r}\sum_{\ell,s}
\langle k,r|Y\otimes Z|\ell,s\rangle\,
A_{r_m}^{\dagger}\cdots A_{r_1}^{\dagger}A_{k_n}^{\dagger}\cdots A_{k_1}^{\dagger}\,
M\,
A_{\ell_1}\cdots A_{\ell_n}A_{s_1}\cdots A_{s_m} \\
&=
\sum_{k,\ell}\langle k|Y|\ell\rangle
\sum_{r,s}\langle r|Z|s\rangle\,
A_{r_m}^{\dagger}\cdots A_{r_1}^{\dagger}
\bigl(A_{k_n}^{\dagger}\cdots A_{k_1}^{\dagger}MA_{\ell_1}\cdots A_{\ell_n}\bigr)
A_{s_1}\cdots A_{s_m}
\end{align*}
Regrouping the summations in the above expression, one gets
\begin{align*}
\widehat{Y\otimes Z}(M)&=
\sum_{r,s}\langle r|Z|s\rangle\,
A_{r_m}^{\dagger}\cdots A_{r_1}^{\dagger}
\Bigl(\sum_{k,\ell}\langle k|Y|\ell\rangle A_{k_n}^{\dagger}\cdots A_{k_1}^{\dagger}MA_{\ell_1}\cdots A_{\ell_n}\Bigr)
A_{s_1}\cdots A_{s_m}\\
&= (\widehat{Z}\circ\widehat{Y})(M)
\end{align*}
 This shows \eqref{eq:hatY_composition} and finishes the proof.
\end{proof}

\section{Appendix: Hidden Quantum Markov Models}\label{app:A}

Given a Hilbert space $\mathcal{H}$, we denote by $\mathcal{B}(\mathcal{H})$ the
C$^{*}$-algebra of all linear operators on $\mathcal{H}$.
If $\mathcal{A}$ and $\mathcal{B}$ are finite-dimensional C$^{*}$-algebras, a linear map
$\mathcal{E}:\mathcal{A}\to\mathcal{B}$ is called \emph{positive} if
$\mathcal{E}(X)\succeq 0$ for every positive element $X\succeq 0$ in $\mathcal{A}$.
The map $\mathcal{E}$ is \emph{completely positive} (CP) when, for each $n\in\mathbb{N}$,
its canonical amplification
\[
\mathcal{E}^{(n)}:M_{n}(\mathcal{A})\longrightarrow M_{n}(\mathcal{B}),
\qquad
\mathcal{E}^{(n)}([X_{ij}]) := [\mathcal{E}(X_{ij})],
\]
is positive. A CP map $\mathcal{E}:\mathcal{A}\otimes\mathcal{B}\to\mathcal{A}$ is called a
\emph{transition expectation} if it is unital, that is,
\(
\mathcal{E}(\mathbb{I}_{\mathcal{A}}\otimes\mathbb{I}_{\mathcal{B}})=\mathbb{I}_{\mathcal{A}}.
\)
In the operator-algebraic formulation of quantum Markov chains, such maps
provide the noncommutative analogue of conditional expectations, encoding
one-step updates that “integrate out’’ the second tensor component while
preserving the order structure and the unit of $\mathcal{A}$ \cite{BR}.

In the concrete situation $\mathcal{A}=\mathcal{B}(\mathcal{H})$ and
$\mathcal{B}=\mathcal{B}(\mathcal{K})$, any transition expectation
$\mathcal{E}:\mathcal{B}(\mathcal{H})\otimes\mathcal{B}(\mathcal{K})\to\mathcal{B}(\mathcal{H})$
admits a Kraus representation
\[
\mathcal{E}(X) = \sum_{i=1}^{r} V_{i}^{*} X V_{i},
\qquad
X\in\mathcal{B}(\mathcal{H}\otimes\mathcal{K}),
\]
where $V_{i}:\mathcal{H}\to\mathcal{H}\otimes\mathcal{K}$ are linear operators satisfying
the normalization condition $\sum_{i} V_{i}^{*}V_{i} = \mathbb{I}_{\mathcal{H}}$.
The operators $V_{i}$ are referred to as the Kraus operators of $\mathcal{E}$.

The convex set of (normalized) density operators on $\mathcal{H}$ is
\[
\mathfrak{S}(\mathcal{H})
:=
\bigl\{\rho\in\mathcal{B}(\mathcal{H})
\;\big|\;
\rho\succeq 0,\ \operatorname{Tr}(\rho)=1\bigr\},
\]
whose elements represent quantum states. The Hilbert–Schmidt pairing
\[
\langle\rho,X\rangle
:=
\operatorname{Tr}(\rho X),
\qquad
\rho\in\mathfrak{S}(\mathcal{H}),\ X\in\mathcal{B}(\mathcal{H}),
\]
induces a unique adjoint for any CP, unital map
$\mathcal{E}:\mathcal{B}(\mathcal{H}\otimes\mathcal{K})\to\mathcal{B}(\mathcal{H})$.
More precisely, there exists a dual map
$\mathcal{E}_{*}:\mathfrak{S}(\mathcal{H})\to\mathfrak{S}(\mathcal{H}\otimes\mathcal{K})$
such that
\begin{equation}\label{eq_HS}
\operatorname{Tr}\!\bigl(\mathcal{E}_{*}(\rho)\,X\bigr)
=
\operatorname{Tr}\!\bigl(\rho\,\mathcal{E}(X)\bigr),
\qquad
\rho\in\mathfrak{S}(\mathcal{H}),\ X\in\mathcal{B}(\mathcal{H}\otimes\mathcal{K}).
\end{equation}
If $\mathcal{E}$ is written in Kraus form as above, then
\[
\mathcal{E}_{*}(\rho)
=
\sum_{i} V_{i}\rho V_{i}^{*},
\]
and $\mathcal{E}_{*}$ is a completely positive, trace-preserving map on density
operators, that is, a quantum channel.

In the Schr\"odinger picture, a channel
$\Phi:\mathfrak{S}(\mathcal{H})\to\mathfrak{S}(\mathcal{H}')$ encodes a causal influence
from the input system $\mathcal{H}$ to the output system $\mathcal{H}'$: manipulating
the input state can modify the statistics of all subsequent measurements on the
output, but not conversely.
The corresponding Heisenberg-picture map propagates observables backwards
along this causal arrow and realises the local dynamics of a quantum Markov chain.

We now fix the setting for hidden quantum Markov models.
Let $\mathcal{H}$ be an $N$-dimensional Hilbert space describing the hidden, or
internal, degrees of freedom, with a chosen orthonormal basis
$\{|j\rangle : j\in I_{H}\}$, where $I_{H}=\{1,\dots,N\}$.
We denote the associated hidden observable algebra by
\(\mathcal{B}_{H} := \mathcal{B}(\mathcal{H}).\)
Similarly, let $\mathcal{K}$ be an $M$-dimensional Hilbert space modelling the
output, or observation, space, with orthonormal basis
$\{|e_{k}\rangle : k\in I_{O}\}$, $I_{O}=\{1,\dots,M\}$, and observable algebra
\(\mathcal{B}_{O} := \mathcal{B}(\mathcal{K}).\)
Informally, elements of $\mathcal{B}_{H}$ play the role of latent quantum
causes, while elements of $\mathcal{B}_{O}$ represent observable effects.

Time is discrete and indexed by $\mathbb{N}=\{0,1,2,\dots\}$.
At each time $n\in\mathbb{N}$ we consider copies
$\mathcal{H}_{n}\simeq\mathcal{H}$ and $\mathcal{K}_{n}\simeq\mathcal{K}$ with
local observable algebras
\[
\mathcal{B}_{H;n} := \mathcal{B}(\mathcal{H}_{n})\simeq\mathcal{B}_{H},
\qquad
\mathcal{B}_{O;n} := \mathcal{B}(\mathcal{K}_{n})\simeq\mathcal{B}_{O}.
\]
The joint hidden–output algebra at time $n$ is the spatial tensor product
\[
\mathcal{B}_{H,O;n} := \mathcal{B}_{H;n}\otimes\mathcal{B}_{O;n},
\]
which collects all observables acting on the hidden subsystem and its
associated output at that instant.

A Hidden Quantum Markov Model (HQMM) provides a generative description of quantum stochastic processes through a bipartite structure separating hidden dynamics from observable outputs. Formally, an HQMM is defined as a state on the quasi-local algebra $\mathcal{A}_{H,O} = (\mathcal{B}(\mathcal{H}) \otimes \mathcal{B}(\mathcal{K}))^{\otimes \mathbb{N}}$, representing an infinite bipartite quantum system where $\mathcal{A}_{H} = \mathcal{B}(\mathcal{H})^{\otimes \mathbb{N}}$ describes the hidden system and $\mathcal{A}_{O} = \mathcal{B}(\mathcal{K})^{\otimes \mathbb{N}}$ represents the observable output system. The model is specified by a \emph{generative triplet} $\Xi = (\phi_0, \mathcal{E}_H, \mathcal{E}_{O,H})$, where:

\begin{itemize}
    \item $\phi_0: \mathcal{B}(\mathcal{H})\to \mathbb{C}$ is an initial state on the hidden algebra;
    \item $\mathcal{E}_{H}: \mathcal{B}(\mathcal{H}) \otimes \mathcal{B}(\mathcal{H}) \to \mathcal{B}(\mathcal{H})$ is a CPIP map governing hidden system dynamics;
    \item $\mathcal{E}_{O,H}: \mathcal{B}(\mathcal{H}) \otimes \mathcal{B}(\mathcal{K})\to \mathcal{B}(\mathcal{H})$ is a CPIP map encoding observable emission.
\end{itemize}

Two distinct causal structures emerge from different compositional orders of these maps, each with profound implications for quantum information transmission.

The coupling between hidden and observable degrees of freedom is described by
emission expectations
\[
\mathcal{E}_{H,O;n}:\mathcal{B}_{H;n}\otimes\mathcal{B}_{O;n}\to\mathcal{B}_{H;n},
\qquad n\ge 0.
\]
Their duals
\[
(\mathcal{E}_{H,O;n})_{*}:\mathfrak{S}(\mathcal{H}_{n})\to\mathfrak{S}(\mathcal{H}_{n}\otimes\mathcal{K}_{n})
\]
are quantum operations that generate an output system $\mathcal{K}_{n}$ conditionally
on the hidden state at time $n$, thereby implementing the causal arrow
$H_{n}\to O_{n}$.
If one traces out the hidden algebra, one obtains an induced process on
$\mathcal{B}_{O;\mathbb{N}}$ whose temporal correlations need not satisfy classical
Markov conditions; in process-tensor language, this apparent non-Markovianity
reflects genuine quantum memory propagated within $\mathcal{B}_{H;\mathbb{N}}$.

Given local elements $a_{n}\in\mathcal{B}_{H;n}$, $a_{n+1}\in\mathcal{B}_{H;n+1}$, and
$b_{n}\in\mathcal{B}_{O;n}$, we consider the one-step block maps
\begin{equation}\label{eq_F}
\mathcal{F}_{a_{n},b_{n}}^{(n)}(a_{n+1})
:=
\mathcal{E}_{H;n}\bigl(\mathcal{E}_{H,O;n}(a_{n}\otimes b_{n})\otimes a_{n+1}\bigr),
\end{equation}
and
\begin{equation}\label{eq_G}
\mathcal{G}_{a_{n},b_{n}}^{(n)}(a_{n+1})
:=
\mathcal{E}_{H,O;n}\bigl(\mathcal{E}_{H;n}(a_{n}\otimes a_{n+1})\otimes b_{n}\bigr),
\end{equation}
which encode two distinct causal orderings for the local dynamics.

\begin{definition}\label{CHQMMs}[Causal hidden quantum Markov model]
A \emph{causal hidden quantum Markov model} (causal HQMM) is a quadruple
\[
\Xi_{\mathrm{caus}}
=
\bigl(\phi_{H,0},(\mathcal{E}_{H;n})_{n\ge 0},(\mathcal{E}_{H,O;n})_{n\ge 0},(\mathcal{G}^{(n)})_{n\ge 0}\bigr),
\]
with the same initial hidden state \(\phi_{H,0}\) and the same families of hidden transition and emission expectations \((\mathcal{E}_{H;n})_{n\ge 0}\), \((\mathcal{E}_{H,O;n})_{n\ge 0}\) as above, but a different one-step composition rule encoded in the block maps \(\mathcal{G}^{(n)}\). For each \(n\), the block map
\[
\mathcal{G}^{(n)}:\mathcal{B}_{H;n}\otimes\mathcal{B}_{O;n}\otimes\mathcal{B}_{H;n+1}\to\mathcal{B}_{H;n}
\]
is defined, for \(a_n\in\mathcal{B}_{H;n}\), \(b_n\in\mathcal{B}_{O;n}\), \(X\in\mathcal{B}_{H;n+1}\), by
\[
\mathcal{G}^{(n)}_{a_n,b_n}(X)
:=
\mathcal{E}_{H,O;n}\bigl(\mathcal{E}_{H;n}(a_n\otimes X)\otimes b_n\bigr).
\]
For a local tensor \(\bigotimes_{m=0}^n (a_m\otimes b_m)\in\mathcal{B}_{H,O;[0,n]}\), the finite-time joint expectation of the causal HQMM is then
\begin{equation}\label{eq:caus-state-no-sup}
\psi_{H,O}\!\left(\bigotimes_{m=0}^n (a_m\otimes b_m)\right)
:=
\phi_{H,0}\circ
\mathcal{G}^{(0)}_{a_0,b_0}\circ\mathcal{G}^{(1)}_{a_1,b_1}\circ\cdots\circ
\mathcal{G}^{(n)}_{a_n,b_n}(\mathbb{I}_{H;n+1}),
\end{equation}
which again extends uniquely to a state \(\psi_{H,O}\) on \(\mathcal{B}_{H,O;\mathbb{N}}\). The associated hidden and observable marginals are
\[
\psi_H
:=
\psi_{H,O}\!\restriction_{\mathcal{B}_{H;\mathbb{N}}},\qquad
\psi_O
:=
\psi_{H,O}\!\restriction_{\mathcal{B}_{O;\mathbb{N}}}.
\]
\end{definition}

In the \emph{conventional} HQMM architecture, the hidden system at time $n$
first produces the observation at time $n$ via the emission map $\mathcal{E}_{H,O;n}$,
and is only afterwards propagated to time $n+1$ by the hidden transition
$\mathcal{E}_{H;n}$.
This emission–then–transition prescription underlies the standard HQMM update
rule and naturally appears in Heisenberg-picture descriptions of finitely
correlated and matrix product states driven by a hidden quantum memory.

By contrast, in the \emph{causal} HQMM architecture introduced here, the local
time order is reversed: the hidden configuration is first updated from
$H_{n}$ to $H_{n+1}$ by $\mathcal{E}_{H;n}$, and only the updated hidden state
acts as the cause of the observation at time $n$ through $\mathcal{E}_{H,O;n}$.
At the level of one-step maps, this corresponds to using the causal block map
$\mathcal{G}_{a_{n},b_{n}}^{(n)}$ in \eqref{eq_G} instead of the conventional
composition \eqref{eq_F}.
Since $\mathcal{E}_{H;n}$ and $\mathcal{E}_{H,O;n}$ do not commute in general, the
conventional and causal prescriptions give rise to inequivalent hidden quantum
Markov processes, even though they are built from the same local data
$(\phi_{H,0},(\mathcal{E}_{H;n})_{n\ge 0},(\mathcal{E}_{H,O;n})_{n\ge 0})$.

\section*{CRediT authorship contribution statement}

Abdessatar Souissi: Conceptualization, Investigation, Resources, Methodology, Writing – original draft, Writing – review \& editing.

Amenallah Andolsi: Visualization, Project administration, Methodology, Writing – original draft, Writing – review \& editing.

\section*{Data availability}

No data was used for the research described in the article.

\section*{Conflict of Interest}
The authors declare that there is no conflict of interest regarding the publication of this paper.

\section*{Funding}
The authors received no funding for this work.

\end{document}